  \providecommand\BibTeX{{%
    \normalfont B\kern-0.5em{\scshape i\kern-0.25em b}\kern-0.8em\TeX}}}
\setlist[enumerate,1]{1.}
\newlength{\codeindent}
\newcommand{\codeexample}{\footnotesize\setstretch{1.0}\addtolength{\jot}{-0.4em}}
\newcommand{\Infer}[3]{\ensuremath{\inferrule*[right={#1}]{#2}{#3}}}
\newcommand{\trm}[1]{\textrm{#1}}
\newcommand{\ttt}[1]{\texttt{#1}}
\newcommand{\tsc}[1]{\textsc{#1}}
\newcommand{\defunc}{~\rightsquigarrow~}
\newcommand{\transformsto}[1]{\xrightarrow{\ \ {#1}\ \ }}
\newcommand{\many}[1]{\underline{{{#1}}\vphantom{\dot{{#1}}}}}
\newcommand{\dbar}[1]{\many{\dot{#1}}}
\newcommand{\ddbar}[1]{\many{\ddot{#1}}}
\newcommand{\letx}[3]{\ttt{let(}~#1~\ttt{) =}~#2~\ttt{in}~#3}
\newcommand{\defx}[4]{\ttt{def}~#1\ttt{(}~#2~\ttt{)} \to \ttt{(}#3\ttt{) =}~#4}
\newcommand{\defxArrow}[5]{\ttt{def}~#1\ttt{(}~#2~\ttt{)} #4\to \ttt{(}#3\ttt{) =}~#5}
\newcommand{\R}{\mathbb{R}}
\newcommand{\Le}{\mathcal{L}}
\newcommand{\T}{\mathcal{T}}
\newcommand{\Pa}{\mathcal{U}}
\newcommand{\J}{\mathcal{J}}
\newcommand{\eps}{\varepsilon}
\newcommand{\ring}{\mathcal{R}}
\DeclareMathOperator{\dup}{dup}
\DeclareMathOperator{\drop}{drop}
\newcommand{\vocab}[1]{\emph{{#1}}}
\newcommand{\Lone}{Linear A}
\newcommand{\Lonestrict}{Linear B}
\newcommand{\parens}[1]{\ttt{(} {#1} \ttt{)}}
\newcommand{\tupOp}[1]{\otimes \ttt{(} #1 \ttt{)}}
\newcommand{\dotp}[2]{\langle {#1}, {#2}\rangle}
\newlength{\wdth}
\newcommand{\rl}[1]{\tsc{{#1}}}     %
\newcommand{\W}{\mathcal{W}}
\newcommand{\inargs}{L_{\trm{in}}}
\newcommand{\outargs}{L_{\trm{out}}}
\newcommand{\elval}{\mathsf}
\newcommand{\delval}[1]{\dot{\elval{#1}}}
\newcommand{\ddelval}[1]{\ddot{\elval{{#1}}}}
\newcommand{\melval}[1]{\many{\elval{{#1}}}}
\newcommand{\mdelval}[1]{\many{\delval{{#1}}}}
\newcommand{\mddelval}[1]{\many{\ddelval{{#1}}}}
\newcommand{\envNL}{\Gamma}         %
\newcommand{\envTan}{\dot{\Gamma}}  %
\newcommand{\envCot}{\ddot{\Gamma}} %
\newcommand{\envT}{\envCot}         %
\newcommand{\seqT}{\ddot{\Delta}}         %
\newcommand{\seqTan}{\dot{\Delta}}         %
\newcommand{\opT}[2]{\T(#1 \mid #2)}
\newcommand{\typet}[1]{#1}          %
\newcommand{\lin}[1]{{#1}\ttt{.lin}}
\newcommand{\nonlin}[1]{{#1}\ttt{.nonlin}}
\newcommand{\opP}[2]{\Pa(#2)}  %
\newcommand{\hole}{\square}
\newcommand{\jvp}[2]{\J({#1} \mid {#2})}
\newcommand{\typej}[1]{{#1}}    %
\newif\ifextended
\begin{document}

\begin{CCSXML}
<ccs2012>
   <concept>
       <concept_id>10002950.10003714.10003715.10003748</concept_id>
       <concept_desc>Mathematics of computing~Automatic differentiation</concept_desc>
       <concept_significance>500</concept_significance>
       </concept>
 </ccs2012>
\end{CCSXML}

\ccsdesc[500]{Mathematics of computing~Automatic differentiation}

\keywords{automatic differentiation, decomposition, transpose, partial evaluation}

\title{You Only Linearize Once}
\subtitle{Tangents Transpose to Gradients}

\author{Alexey Radul}
\email{axch@google.com}
\affiliation{%
  \institution{Google Research}
  \country{USA}
}

\author{Adam Paszke}
\email{apaszke@google.com}
\affiliation{%
  \institution{Google Research}
  \country{Poland}
}

\author{Roy Frostig}
\email{frostig@google.com}
\affiliation{%
  \institution{Google Research}
  \country{USA}
}

\author{Matthew J. Johnson}
\email{mattjj@google.com}
\affiliation{%
  \institution{Google Research}
  \country{USA}
}

\author{Dougal Maclaurin}
\email{dougalm@google.com}
\affiliation{%
  \institution{Google Research}
  \country{USA}
}

\renewcommand{\shortauthors}{A. Radul, A. Paszke, R. Frostig, M. Johnson, D. Maclaurin}

\begin{abstract}
Automatic differentiation (AD) is conventionally understood as a family of distinct algorithms, rooted in two ``modes''---forward and reverse---which are typically presented (and implemented) separately.
Can there be only one?
Following up on the AD systems developed in the JAX and Dex projects, we formalize a decomposition of reverse-mode AD into (i) forward-mode AD followed by (ii) unzipping the linear and non-linear parts and then (iii) transposition of the linear part.

To that end, we define a (substructurally) linear type system that can prove a class of functions are (algebraically) linear.
Our main results are that forward-mode AD produces such linear functions, and that we can unzip and transpose any such linear function, conserving cost, size, and linearity.
Composing these three transformations recovers reverse-mode AD.
This decomposition also sheds light on checkpointing, which emerges naturally from a free choice in unzipping \ttt{let} expressions.
As a corollary, checkpointing techniques are applicable to general-purpose partial evaluation, not just AD.

We hope that our formalization will lead to a deeper understanding of automatic differentiation
and that it will simplify implementations,
by separating the concerns of differentiation proper from the concerns of gaining efficiency
(namely, separating the derivative computation from the act of running it backward).
\end{abstract}

\maketitle

\section{Introduction}
\label{sec:introduction}
Automatic differentiation (AD) powers not only deep machine learning, but also applications in broader numerical computing, from robotics to molecular dynamics to weather simulation.
For example, in deep learning the task of choosing how to set the billions of parameters of a neural network is almost always attacked with AD: a programmer writes a prediction function representing how the network maps parameters and example inputs to predicted outputs, as well as a scalar-valued loss function penalizing deviations from expected outputs to predicted ones.
Then to incrementally improve a given setting of the parameters, the programmer need only ask an AD system for the gradient of the loss with respect to the parameters on some example inputs and outputs, then update the parameters by adding a small negative multiple of that gradient.
Repeat a million times, et voil\`{a}, la singularit\'{e} est proche!

AD is traditionally organized into two \emph{modes}, forward-mode and reverse-mode.
Each mode has its advantages and applications.
But most systems only implement one mode, or if both are implemented, the implementations are almost entirely separate.

Do they need to be?
After all, reverse-mode AD is similar to forward-mode AD, except one computes the derivative part backward.
We take this description literally: Our goal is to separate differentiation proper from direction reversal,
and identify a clear boundary between them. 
The ``direction reversal'' is \vocab{transposition}, an interesting and potentially useful code transformation in its own right.
The boundary between differentiation and transposition consists of provably linear functions: linearity (proven by our type system) is a sufficient condition for transposability, and we show that automatic derivatives are provably linear.

This simplifies implementations of reverse-mode AD (see \Cref{fig:ad-full}).  The actual differentiation is the province of the widely known and relatively simple (and covariant!) forward-mode transformation.
Reverse mode is derived therefrom by a separate transformation that now must only confront linearity, not derivatives. 
The notion of linearity we formalize here serves as a simpler and clearer (and mechanically verifiable) input invariant for transposition than one of the form ``this program was generated by forward-mode AD, so it involves no unacceptable thing X.''

This same architecture also makes AD systems easier for a user to extend.  A user wishing to supply an efficient custom derivative for some function need only supply the forward (or reverse) derivative; as long as it is provably linear, the reverse (resp., forward) derivative can be derived automatically.

Reverse-mode automatic differentiation is already implemented in the proposed way in two systems we know of: JAX\footnote{The \texttt{jax.grad} function is exactly a composition of three functions that implement the steps we outline in this work.}~\citep{frostig2018compiling, jax2018github} and Dex~\citep{paszke2021dex}.  Our hope is to make the technique easier to maintain within these systems, by formalizing the provable linearity invariant; and to make it more widely accessible outside of them by describing it on a simple, stand-alone object language.
We also hope to pave the way for exposing transposition as a user-accessible higher-order function in its own right, by clarifying its input and output invariants.

More concretely, our contributions are that:
\begin{itemize}
\item We explicitly decompose reverse-mode automatic differentiation into forward mode (\Cref{sec:jvp}), a new unzipping transformation (\Cref{sec:partial}), and transposition (\Cref{sec:transposition}).
\item We introduce a new linearly-typed intermediate language for linear computations (\Cref{sec:language}), whose type system captures the provable linearity invariant. We call the language ``\Lone{}.''
\item We define and prove correctness for unzipping and transposition, and we prove that they preserve work, code size, and provable linearity.  Together with the known results on correctness and work- and size-preservation for forward mode, this implies correctness and work- and size-preservation of reverse mode when implemented this way.\footnote{In our cost model, transposition and unzipping preserve work exactly, and program size up to a constant factor.  Forward mode, as usual, only preserves work up to a constant factor.}
\end{itemize}

\begin{figure}
\centering
\begin{subfigure}[t]{0.08\textwidth}
    \centering
    \includegraphics[height=1.1in]{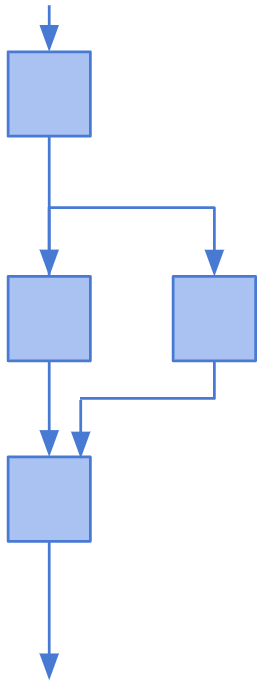}
    \caption{$f$}
    \label{fig:ad-1-orig}
\end{subfigure}
\hfill
\begin{subfigure}[t]{0.16\textwidth}
    \centering
    \includegraphics[height=1.1in]{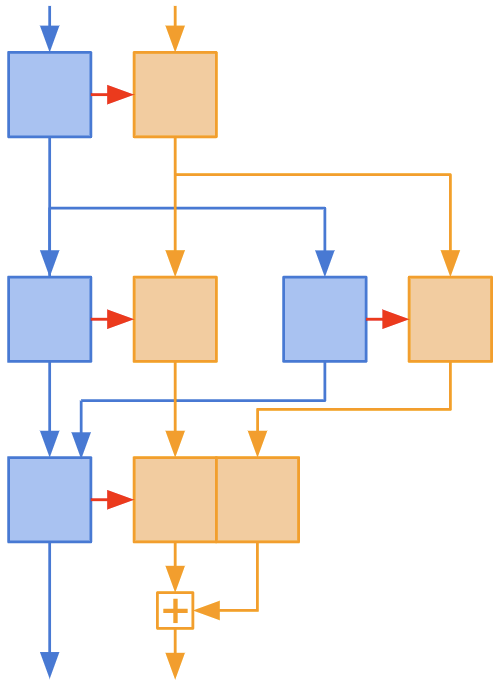}
    \caption{forward derivative}
    \label{fig:ad-2-jvp}
\end{subfigure}
\hfill
\begin{subfigure}[t]{0.25\textwidth}
    \centering
    \includegraphics[height=1.1in]{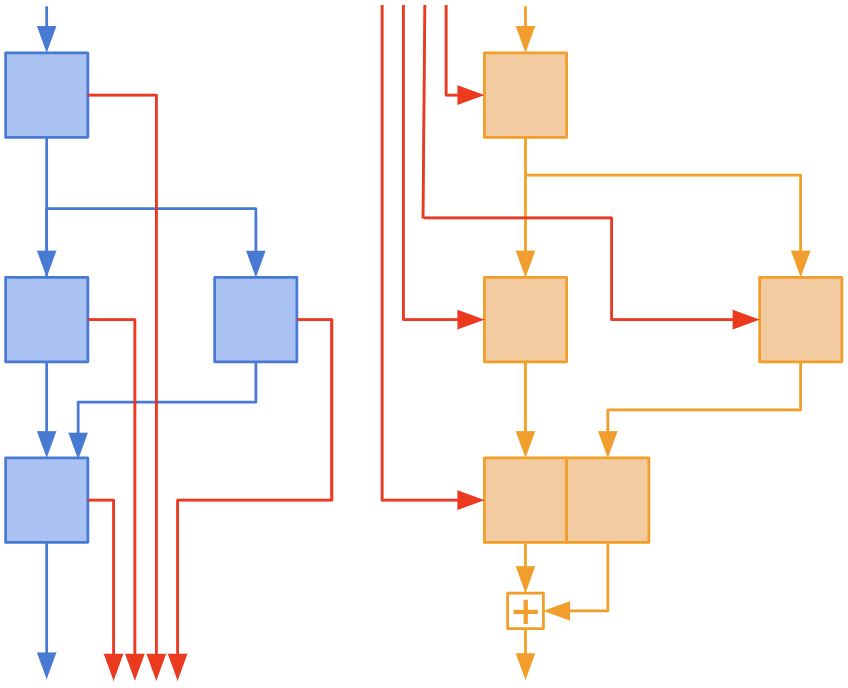}
    \caption{forward derivative, unzipped}
    \label{fig:ad-3-unzip}
\end{subfigure}
\hfill
\begin{subfigure}[t]{0.14\textwidth}
    \centering
    \includegraphics[height=1.1in]{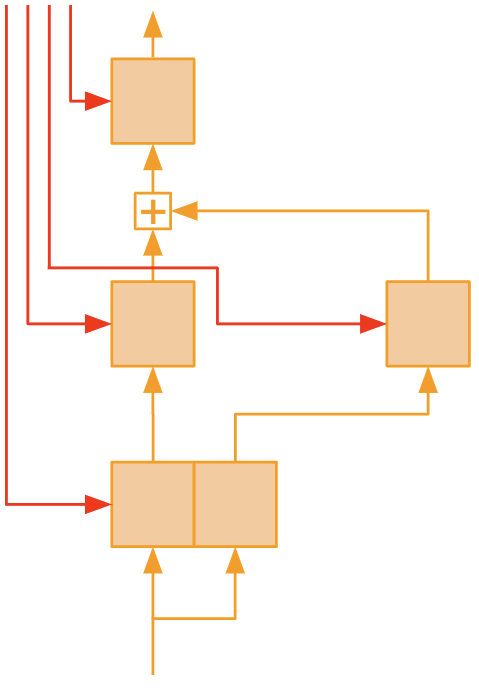}
    \caption{derivative, transposed}
    \label{fig:ad-4-transpose}
\end{subfigure}
\hfill
\begin{subfigure}[t]{0.17\textwidth}
    \centering
    \includegraphics[height=1.1in]{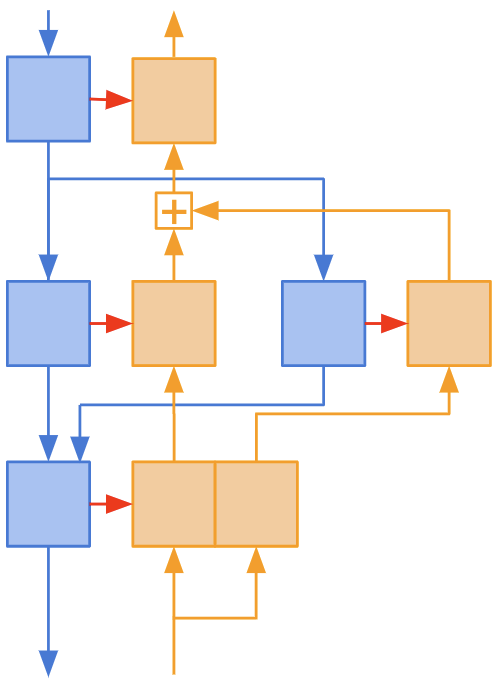}
    \caption{reverse derivative}
    \label{fig:ad-5-vjp}
\end{subfigure}
\caption{Reverse-mode automatic differentiation as forward differentiation, unzipping, and transposition.}
\label{fig:ad-full}
\end{figure}

\section{Preliminaries}

\subsection{Mathematical Differentiation Operations Corresponding to AD's Two Modes}
At its core, AD brings two particular differentiation operations from mathematical functions to programming language functions.
These two operations correspond to the forward and reverse modes.
The first operation, the \emph{Jacobian-vector product (JVP)}, underlies forward-mode and answers the question: if I perturb my input $x$ a bit in some direction $v$, how does my output change?
More precisely, given a sufficiently nice mathematical function $f : \mathbb{R}^n \to \mathbb{R}^m$, we can define the Jacobian at $x \in \mathbb{R}^n$ as the linear map  $\partial f(x) : \mathbb{R}^n \to \mathbb{R}^m$ such that
\begin{equation*}
f(x + v) = f(x) + \partial f(x)(v) + o(\| v \|),
\qquad
\forall v \in \mathbb{R}^n.
\end{equation*}
We then say a \emph{Jacobian-vector product for $f$} is the mapping
\begin{equation*}
(x, v) \mapsto (f(x), \, \partial f(x)(v)).
\end{equation*}
This definition is compositional, in that the JVP of $f \circ g$ can be expressed by applying the JVP of $f$ to the result of the JVP of $g$. It also allows for efficient evaluation, as we make precise in the sequel.

The second core mathematical operation represented in automatic differentiation is the \emph{vector-Jacobian product (VJP)}, which underlies reverse-mode.
It answers a more subtle question: given a linear function on small perturbations to a function's output, what is the corresponding linear function on small perturbations to the input?
That is, for some fixed $x$, given a vector $u \in \mathbb{R}^m$, find the vector $w \in \mathbb{R}^n$ such that
\begin{equation*}
\langle u, \, \partial f(x)(v) \rangle = \langle w, \, v \rangle,
\qquad
\forall v \in \mathbb{R}^n,
\end{equation*}
where $\langle \cdot , \, \cdot \rangle$ denotes the standard inner product.
This mapping from $u$ to $w$ is in fact linear, so we can define a new linear map $\partial f(x)^T : \mathbb{R}^m \to \mathbb{R}^n$ by
\begin{equation*}
\langle u, \, \partial f(x)(v) \rangle = \langle \partial f(x)^T(u), \, v \rangle.
\end{equation*}
We then say the \emph{vector-Jacobian product for $f$} is the mapping
\begin{equation*}
x \mapsto (f(x), \, u \mapsto \partial f(x)^T(u)).
\end{equation*}
This choice of definition is also compositional: the VJP of $f \circ g$ can be expressed by composing the VJPs of $f$ and $g$, though the linear functions must be composed in reverse order as $\partial g(x)^T \circ \partial f(x)^T$. Hence the name "reverse-mode"!

The VJP is the more useful operation for gradient-based optimization, like in training neural networks. That's because with one VJP we compute the direction in parameter space orthogonal to the loss's local level sets. That is, for a function $f: \mathbb{R}^n \to \mathbb{R}$, the gradient is simply $\nabla f(x) = \partial f(x)^T(1) \in \mathbb{R}^n$. The gradient can also be computed using forward-mode, but then it requires $n$ evaluations, losing the asymptotic efficiency of AD.

The implementation of reverse-mode is often considered much harder than that of forward-mode, with little shared code between the two.  Yet, the ``real work'' of differentiation is embedded in the linear functions $\partial f(x)$ and $\partial f(x)^T$, which are just transposes of each other.

\subsection{Linearity}

Linearity is the key to relating forward and reverse modes. The word ``linear'' actually has two relevant meanings when discussing a programming language for numerical computations:
\begin{enumerate}
    \item From substructural logic and linear type systems, a subprogram $P(x)$ is \vocab{substructurally linear} if it uses its argument $x$ exactly once.
    \item From linear algebra, a (mathematical) function $f(x)$ is \vocab{algebraically linear} if it is a vector space homomorphism, i.e., if $f(x + y) = f(x) + f(y)$ and $f(cx) = cf(x)$ for scalar $c$.
\end{enumerate}

These two notions are not named the same by accident.
Indeed, when is a polynomial, written as a sum of products of variables (no exponentiation), algebraically linear in $x$?
Exactly when every nonzero monomial term uses $x$ exactly once.
This is how, when we get to \Cref{thm:linearity}, we will be able to prove algebraic linearity using an appropriately designed substructural type system.

The Jacobian-vector product (JVP) map at a point is, by mathematical definition, linear algebraically, but it also turns out (as we will also show in \Cref{sec:jvp}) that its instantiation in AD is linear substructurally.
That substructural linearity will, in turn, allow us to mechanically transpose (\Cref{sec:transposition}) the program for computing $f$ into a (also substructurally linear) program that computes $f^\T$ with the same performance.  This then recovers reverse-mode AD, which computes the vector-Jacobian-product (VJP) map.

Part of the point of this exercise is that the transposition transformation we will define in \Cref{sec:transposition} applies to all substructurally linear functions.
Substructurally linear functions thus form an abstraction boundary between differentiation and transposition; and transposition (of user-defined functions) could then be exposed as a code transformation directly.

Note that there are algebraically linear functions that our type system will not accept as substructurally linear, and on which our transposition transformation would fail.
For example, the function $f(x) = (x * x) / x$ is linear in $x$, but it cannot be typed as linear in \Lone\ as written.
Moreover, attempting to transpose such a function as we do, by recurring on the subexpressions, is doomed to failure---the subexpression $\ldots / x$ is not linear in $x$ in isolation, so has no transpose.
Fortunately, as we prove in \Cref{sec:jvp}, automatic differentiation never produces such functions.

\section{Notation Reference}

By convention, we use an \textbf{over-dot} to name linear terms: $\dot v$.
The goal is to evoke the conventional physics notation for derivatives, since automatic differentiation is our main topic.
Formally, however, we give the dot itself no structured meaning.  So, the variable $\dot v$ is just a different variable from $v$; any relationship it may have to $v$ emerges from our program transformations, rather than from how we write it.

The \textbf{double-dot} $\ddot v$ connotes co-tangents (not double derivatives), i.e., linear terms that appear in a transposed function.  The nomenclature is from the reverse phase of reverse-mode AD.

The \textbf{underline} $\many v$ means ``zero or more of these elements''; we put it below rather than above to avoid clashing with dots: $\dbar v$.
When the same thing appears in different underlined expressions in the same context, they are parallel.
For instance, when a rule mentions both $\many v$ and $\many {v: \tau}$, those lists are the same length.

The \textbf{semicolon} $(x; y)$ separates non-linear entities (on the left) from linear ones (on the right).

The \textbf{angle brackets} $\langle x, y \rangle$ mean dot product when defining the meaning of transposition, and delimit transposition environments when explaining its implementation.

The \textbf{square brackets} $e[\many x; \many y]$ mean evaluating an expression with values $\many x$ and $\many y$ bound to its free non-linear and linear variables respectively.

\section{Language}
\label{sec:language}

We introduce \Lone, a model language of \vocab{indexed linear} computations.
The main idea in \Lone\ is that the syntax marks which values are supposed to be linear (substructurally and therefore algebraically) and which are not.
Non-linear values can be computed arbitrarily (and may happen to be algebraically linear), but may not depend on linear values.
Linear values, on the other hand, must be computed linearly from linear inputs, but may depend on non-linear values through scaling.
This leads to the indexed-linear pattern of data flow shown in \Cref{fig:dependencies}.

\begin{figure}
\centering
\begin{minipage}{0.49\columnwidth}
\centering
\scalebox{1}{%
\begin{tikzpicture}[
  y=-1cm,
  scale=1.3,
  decoration=snake,
  xx/.style={circle, draw=black, inner sep=0pt, minimum size=9mm, semithick},
  dd/.style={}
]
\node[xx] (xp) at (0,   0) {$\melval{x}$};
\node[xx] (xt) at (1.5, 0) {$\mdelval{x}$};
\node[xx] (yp) at (0,   2) {$\melval{y}$};
\node[xx] (yt) at (1.5, 2) {$\mdelval{y}$};

\draw [->] (xp) -- (yp) node[midway,left] {};
\draw [->] (xp) -- (yt) node[midway,above] {};
\draw [-o] (xt) to (yt);
\end{tikzpicture}
}%
\subcaption{Expression $e$ on variables $\melval{x}; \mdelval{x}$ with results $\melval{y}; \mdelval{y}$}
\label{fig:dependencies:block}
\end{minipage}
\hfill
\begin{minipage}{.49\columnwidth}
\centering
\scalebox{.75}{%
\begin{tikzpicture}[
  y=-1cm,
  scale=1.3,
  decoration=snake,
  xx/.style={circle, draw=black, inner sep=0pt, minimum size=9mm, semithick},
  dd/.style={}
]
\node[xx] (xp) at (0,   0) {$\melval{x}$};
\node[xx] (xt) at (1.5, 0) {$\mdelval{x}$};
\node[xx] (yp) at (0,   1.5) {$\melval{y}$};
\node[xx] (yt) at (1.5, 1.5) {$\mdelval{y}$};
\node[xx] (zp) at (0,   3) {$\melval{z}$};
\node[xx] (zt) at (1.5, 3) {$\mdelval{z}$};

\draw [->] (xp) -- (yp) node[midway,left] {};
\draw [->] (xp) -- (yt) node[midway,above] {};
\draw [-o] (xt) to (yt);
\draw [->] (yp) -- (zp) node[midway,left] {};
\draw [->] (yp) -- (zt) node[midway,above] {};
\draw [-o] (yt) to (zt);
\end{tikzpicture}
}%
\subcaption{Sequential composition of expressions $e_1, e_2$}
\label{fig:dependencies:stack}
\end{minipage}
\caption{Dataflow diagrams of
(a) a general \Lone\ expression $e$,
and (b) sequential composition of such expressions.
The non-linear results
of an expression (left) depend only only on its non-linear free variables,
whereas the linear results (right, dotted) depend
non-linearly on the non-linear free variables
and linearly on the linear free variables.  Composition preserves this (non-)dependence pattern.}
\label{fig:dependencies}
\end{figure}

For example, the function $g(x; y) = (x^2; x^2 y)$ is indexed linear.
The first result of $g$ depends non-linearly on $x$ and does not depend on $y$ at all; whereas the second result depends linearly on $y$, but the specific linear function of $y$ is indexed by $x$.
If we tag $x$ and the first result as ``non-linear'', and $y$ and the second result as ``linear'', $g$ will type-check as a valid function of \Lone.\footnote{It will also type-check if we tag both inputs and both outputs ``non-linear,'' but not with any other tagging.}

The \Lone\ language is not designed as a user-facing language, but rather as a sublanguage that a given instance of automatic differentiation operates on.
To wit, if one seeks to differentiate some $F$ at some point $x$, the value of $x$ determines the control flow choices in the implementation of $F$, and the program corresponding to the path actually taken (sometimes called a ``trace'') becomes straight-line.
Differentiating such straight-line trace programs is the core task of AD, and we restrict 
\Lone\ to be total, functional, and first-order in order to focus our attention on it.
While a more expressive object language would certainly be preferable, the technique of AD by explicit transposition has not been previously described even in this restricted setting; in \Cref{sec:future-work}, we will touch on some obstacles to formalizing transposition on a Turing-complete object language.

Many differentiable array languages are DSLs of the same expressive power as \Lone, embedded in a host language that provides Turing-complete metaprogramming facilities.
Furthermore, reverse-mode automatic differentiation of control flow constructs (including higher-order functions) cannot be performed without forming a dynamic trace of the computation, and the languages conventionally used in those traces have expressive power similar to \Lone{}.
For example, \citet{pearlmutter2008lambda} and \citet{wang2019shiftreset} represent the trace as a sequence of applications.
Similarly, \citet{krawiec2021provably} propose an AD system capable of differentiating a rich higher-order functional language, but the differentiation involves tracing an expression in a language almost identical to the linear fragment of \Lone.
Finally, the language used internally in the JAX project \citet{jax2018github}, which implements our presented method, is also very close to \Lone.
The main differences are that JAX has a significantly larger number of mathematical primitives, as well as some limited control-flow operators (conditionals, and loops with a statically known trip-count).
None of these are fundamental to the presented method and for that reason we elide them for simplicity.
JAX additionally supports array types with statically known shapes, which are isomorphic to product types in \Lone.
On the other hand, \Lone\ is slightly richer than JAX's internal language, as it allows nested let-bindings to have an arbitrary expression on the right hand side, not only immediate primitive applications.

Many details of \Lone\ are unconventional, so we include point-by-point rationales.
We focus on the syntax in \Cref{sec:l1-syntax}, then present and discuss the type system in \Cref{sec:l1-typing}, state our cost model in \Cref{sec:l1-cost}, and formally define our notion of indexed linearity in \Cref{sec:l1-linearity}, where we also prove that the type system enforces it.
Finally, \Cref{sec:linearity-erasure} defines the natural linearity erasure transformation on \Lone, which we will occasionally need later.

\subsection{Syntax}
\label{sec:l1-syntax}

\begin{figure}[t]\scriptsize
\begin{tabular}{r@{~~}r@{~~}ll}
  \multicolumn{3}{l}{Types} \\
  $\tau, \sigma, \pi$ & $::=$ & $\R$ & real scalar \\
  & $\mid$ & $\otimes \many \tau$         & tuple type \\
  \\
  \multicolumn{3}{l}{Values (extralinguistic ground constants)} \\
  $x, y, \dot x, \dot y$ & $::=$ & number in $\R$ & real scalar \\
  & $\mid$ & $\otimes \many x$            & tuple \\
  \\
  \multicolumn{3}{l}{$m,n$-ary Expressions} & $e$ is short-hand for $e^{1,0}$ and $\dot e$ for $e^{0,1}$ \\
  $e^{m,n}$ & $::=$
           & \ttt{(} $\many v; \dbar v$ $\ttt{)}$
             & Multi-value return; $|\many v| = m, |\dbar v| = n$ \\
  & $\mid$ & $\letx {\many {v:\tau}; \many {\dot v:\dot \tau}} {e^{o,p}} {e^{m,n}}$
             & multi-value let; $|\many v| = o, |\dbar v| = p$ \\
  & $\mid$ & $\letx {\otimes {\many w}:\otimes {\many \tau};} {v} {e^{m,n}}$
             & unpacking let for non-linear tuples; $|\many w| = |\many \tau|$  \\
  & $\mid$ & $\letx {;\otimes {\dbar w}:\otimes {\dbar \tau}} {\dot v} {e^{m,n}}$
             & unpacking let for linear tuples; $|\dbar w| = |\dbar \tau|$ \\
  & $\mid$ & $f^{o,p}_{m,n} \ttt{(} \many v; \dbar v \ttt{)}$
             & $o,p$-in, $m,n$-out function application; $|\many v| = o, |\dbar v| = p$ \\
  \\
  \multicolumn{3}{l}{Fixed-arity Expressions} \\
  $e^{1,0}$ & $::=$
           & $u, v, w, z$ & non-linear variable \\
  & $\mid$ & $l$ & literal; $l \in \R$ \\
  & $\mid$ & $\tupOp {\many v}$ & tuple constructor \\
  & $\mid$ & $\sin(v) \mid \cos(v) \mid \exp(v) \mid v_1 + v_2 \mid \ldots $ & primitives for non-linear fragment\\
  $e^{0,1}$ & $::=$
           & $\dot u, \dot v, \dot w, \dot z$ & linear variable \\
  & $\mid$ & $\dot 0_{\dot \tau}$ & linear zero of type $\dot \tau$ \\
  & $\mid$ & $\tupOp {\dbar v}$ & tuple constructor \\
  & $\mid$ & $\dot v_1 + \dot v_2$ & linear addition \\
  & $\mid$ & $v * \dot v$ & right-linear multiplication \\
  $e^{0,2}$ & $::=$
           & $\dup(\dot v)$ & explicit fan-out for linear values \\
  $e^{0,0}$ & $::=$
           & $\drop(e^{m,n})$ & explicit drop \\
  \\
  \multicolumn{3}{l}{Definitions} \\
  $d$ & $::=$ & $\defx {f^{m,n}_{o,p}} {\many {v: \tau}; \many{\dot v: \dot \tau}}
                       {\many \tau;\dbar \tau} {e^{o,p}}$ & $m,n$-in, $o,p$-out function; $|\many v| = m$, $|\dbar v| = n$, $|\many \tau| = o$, $|\dbar \tau| = p$ \\
  \\
  \multicolumn{3}{l}{Programs} \\
  $P$ & $::=$ & $\epsilon \mid d, P$ & list of function definitions \\
  \\
  \multicolumn{3}{l}{Environments} \\
  $\Gamma$ & $::=$ & $\epsilon \mid v : \tau, \Gamma$ & unordered type environment \\
  $\Sigma$ & $::=$ & $\epsilon \mid v \to w, \Sigma$ & variable association map (for $\J$ in \Cref{sec:jvp}) \\
  $\Delta$ & $::=$ & $\langle \rangle \mid \langle v : \tau, \Delta \rangle$ & ordered type environment (for $\T$ in \Cref{sec:transposition}) \\
  \\
  \multicolumn{3}{l}{Contexts (for $\Pa$ in \Cref{sec:partial})} \\
  $E$ & $::=$ & $\hole$ & null context \\
  & $\mid$ & $\letx {\many{v:\tau};} {e^{n,0}} {E}$ & non-linear let context, $|\many v| = n$ \\
  & $\mid$ & $\letx {\otimes \many w : \otimes \many \tau ;} {v} {E}$ & non-linear unpack context \\
  & $\mid$ & $E_1, E_2$ & composition of contexts, by hole-substitution
\end{tabular}
\caption{Syntax of \Lone{}.
Expressions are syntactically indexed by how many non-linear and linear results they return.
A function name enters scope after its definition---no recursive function calls.
The underline $\many v$ means ``zero or more of these elements''; we put it below rather than above to avoid clashing with dots: $\dbar v$.
}
\label{fig:linear-a}
\end{figure}

The syntax of \Lone\ appears in \Cref{fig:linear-a}.  We make \Lone\ a first-order language by only permitting function definitions at the top level.
We make \Lone\ total by disallowing recursion: function names are bound only after their definition, as given by the program order.
We also syntactically mark which values are linear vs not.
In our notation, this distinction shows up as returns, binders, and function application forms taking two lists of arguments (or formal parameters): the non-linear arguments before the semicolon and the linear ones after.

Our function $g(x;y) = (x^2;x^2y)$ from the previous section would be written:%
\begingroup\codeexample%
\begin{align*}
\defxArrow
  {& \ttt{g}^{1,1}_{1,1}}
  {x: \R; \dot y:\R}
  {\R; \R} {}
  {\\
&   \letx {a : \R ;} {x * x} {\\
&   \letx {; \dot b : \R} {a * \dot y} {\\
&   \parens{a; \dot b}
}}}
\end{align*}%
\endgroup
Note that $*$ (as well as $+$) are available as primitives to both linear and non-linear computations in \Lone, just with different typing rules.

\subsubsection{Tuples and Multiple Returns}

We distinguish between product types (written $\otimes$), which are first-class in \Lone, and multiple-value returns (written $\parens{\many v; \dbar v}$), which are the only construct permitted to mention both linear and non-linear values together.  An expression returning multiple values cannot be bound to one variable, but must instead be bound componentwise.
We enforce this in the syntax by indexing every expression $e^{m,n}$ by the number $m$ of non-linear results it returns and the number $n$ of linear results it returns.  As a short-hand, we write $e$ for a one-value non-linear expression, and $\dot e$ for a one-value linear expression.  Products, in contrast, may be bound to variables, but must contain only linear or only non-linear values.

Multiple-value returns are a somewhat unusual syntactic choice.  In our case, they are convenient, because our transposition transformation of \Cref{sec:transposition} will be running expressions ``backwards''.
Since an expression is free to reference more than one variable from its environment, its transpose must be able to return more than one result; and it's more elegant to do so directly in the syntax.

\subsubsection{A-Normal Form}
The \Lone\ syntax enforces administrative normal form \citep{sabry1992anf} on expressions.
Where needed, A-normal form can be enforced by standard techniques;
we ask the reader to indulge us when we write some transformation results without explicitly introducing all the intermediate variables that our syntax technically requires.
We do this for clarity and brevity.
Under the same aegis, we also only allow primitives to be unary or binary, and all primitives except $\dup$ and $\drop$ return exactly one result.
We likewise occasionally abuse notation and write $\parens{;e,\many v}$ where $e$ may return multiple results.
The intended meaning is an expression that executes $e$ and returns all of its results, and also the variables $\many v$.
This can be arranged within the syntax of \Lone\ by allocating fresh names to briefly hold the results of $e$.

A-normal form does make our work preservation results easier.  To wit, it's possible and reasonable to implement a differentiate-unzip-transpose AD system that operates directly on compound expressions.
Such a system would, however, need to be careful to introduce temporaries to avoid duplicating code expressions (and therefore work); for example in rule \rl{JPrimMul} in \Cref{fig:jvp}.  A-normal form saves us from that because the intermediate names are already present.

\subsubsection{Asymmetric Linear Multiply}
Linear computations in \Lone\ model real multiplication as linear in the right-hand argument and non-linear in the left-hand argument.
(This materializes in rule \rl{TypeLinMul} in \Cref{fig:l1-typerules} checking the left argument in the non-linear environment and the right argument in the linear environment.)
Which is of course immaterial, because real multiplication commutes, but we intentionally pick one to keep the linearity of variables syntactically apparent.
Of course, multiplication is actually \emph{multi}-linear, but we leave explicitly modeling this as an open avenue for future research.

\begin{figure}\scriptsize
\renewcommand{\arraystretch}{3}
\begin{tabular}{cc}
\multicolumn{2}{c}{
\framebox{$\Gamma; \dot \Gamma \vdash e^{n,m} : \parens{\many \tau; \dbar \tau},\ |\many \tau| = n,\ |\dbar \tau| = m$}} \\
\multicolumn{2}{c}{
$\Infer{TypeRet}{~~}
  {\many{v: \tau} ; \many{\dot v: \dot \tau} \vdash \parens{ \many v; \dbar v } : \parens{ \many \tau; \dbar \tau } }$
}\\
\multicolumn{2}{c}{
$\Infer{TypeLet}
  {     \Gamma_1; \dot \Gamma_1 \vdash e_1 : \parens{\many \tau; \dbar \tau}
  \quad \Gamma_2, \many{v:\tau}; \dot \Gamma_2, \many{\dot v: \dot \tau} \vdash e_2 : \parens{\many \sigma; \dbar \sigma}}
  {\Gamma_1 \cup \Gamma_2; \dot \Gamma_1 \uplus \dot \Gamma_2 \vdash \letx {\many {v:\tau}; \many{\dot v: \dot \tau}} {e_1} {e_2} : \parens{\many \sigma; \dbar \sigma}}$
}\\
\multicolumn{2}{c}{
$\Infer{TypeUnpack}
  {     \Gamma, \many{w:\tau}; \dot \Gamma \vdash e : \parens{\many \sigma; \dbar \sigma}}
  {\Gamma \cup \{v: \otimes \many \tau\}; \dot \Gamma \vdash \letx {\otimes \many w: \otimes \many \tau;} v e : \parens{\many \sigma; \dbar \sigma}}$
}\\
\multicolumn{2}{c}{
$\Infer{TypeLinUnpack}
  {     \Gamma; \dot \Gamma, \many{\dot w:\dot \tau} \vdash e : \parens{\many \sigma; \dbar \sigma}}
  {\Gamma; \dot \Gamma, \dot v: \otimes \dbar \tau \vdash \letx {;\otimes \dbar w: \otimes \dbar \tau} {\dot v} e : \parens{\many \sigma; \dbar \sigma}}$
}\\
\multicolumn{2}{c}{
$\Infer{TypeApp}
  {     \defx {f} {\many {v:\tau}; \many{\dot v: \dot \tau}} {\many \sigma;\dbar \sigma} e }
  {\many{v:\tau}; \many{\dot v: \dot \tau} \vdash f \parens {\many v; \dbar v} : \parens{\many \sigma; \dbar \sigma}}$
}\\
$\Infer{TypeVar}{~~}
  {v : \tau; \eps \vdash v : \parens{\tau;} }$
&
$\Infer{TypeLit}{~~}
  {\eps; \eps \vdash l : \parens{\R;} }$
\\
$\Infer{TypePrim1}{~~}
  {v:\R; \varepsilon \vdash \sin(v) : \parens{\R;} }$
&
$\Infer{TypePrim2}{~~}
  {\{v_1:\R\} \cup \{v_2:\R\}; \varepsilon \vdash v_1 + v_2 : \parens{\R;} }$
\\
$\Infer{TypeTup}{~~}
  {\cup \{\many{v: \tau}\}; \varepsilon \vdash \otimes \many v : \parens{\otimes \many \tau;} }$
&
$\Infer{TypeLinTup}{~~}
  {\varepsilon; \many{\dot v: \dot \tau} \vdash \otimes \dbar v : \parens{;\otimes \dbar \tau} }$
\\
$\Infer{TypeLinVar}{~~}
  {\eps ; \dot v : \dot \tau \vdash \dot v : \parens{;\dot \tau}}$
&
$\Infer{TypeLinZero}{~~}
  {\eps ; \eps \vdash \dot 0_{\dot \tau} : \parens{;\dot \tau}}$
\\
$\Infer{TypeLinPlus}{~~}
  {\varepsilon; \dot v_1:\dot \tau, \dot v_2:\dot \tau \vdash \dot v_1 + \dot v_2 : \parens{;\dot \tau}}$
& $\Infer{TypeLinMul}{~~}
  {v_1:\R; \dot v_2: \dot \tau \vdash v_1 * \dot v_2 : \parens{;\dot \tau}}$
\\
$\Infer{TypeDup}{~~}
  {\varepsilon; \dot v:\dot \tau \vdash \dup (\dot v) : \parens{;\dot \tau, \dot \tau}}$
&
$\Infer{TypeDrop}
  {     \Gamma; \dot \Gamma \vdash e : \parens{\many \tau;\dbar \tau}}
  {\Gamma; \dot \Gamma \vdash \drop(e) : \parens{;}}$
\\
\multicolumn{2}{c}{
\framebox{$\vdash f$}} \\
\multicolumn{2}{c}{
$\Infer{TypeDef}
  {     \many{v: \tau}; \many{\dot v: \dot \tau} \vdash e : \parens{\many \sigma; \dbar \sigma} }
  { \vdash \defx {f} {\many{v : \tau}; \many{\dot v : \dot \tau}} {\many \sigma; \dbar \sigma} {e} }$
}\\
\end{tabular}
\caption{\Lone\ typing rules.  This is a standard simply-typed system, except that it enforces that all non-linear variables are used at least once, and all linear variables are used exactly once, up to the explicit $\drop$ and $\dup$ operations. The \rl{TypePrim} rules can be extended to a larger set of primitive differentiable maps.}
\label{fig:l1-typerules}
\end{figure}

\subsection{Typing}
\label{sec:l1-typing}

The type system for \Lone\ appears in \Cref{fig:l1-typerules}.
It is a conventional substructural type system that enforces the indexed linearity that \Lone\ is trying to capture.

\subsubsection{Substructural Typing}
The type system is linear on the linear fragment of \Lone, enforcing that every linear variable is used exactly once, except where duplicated or dropped by explicit $\dup$ or $\drop$ operations.
We track linear duplications and deletions because they transpose to addition and zero, respectively.
Our earlier analogy between algebraic and substructural linearity of a polynomial appears here in the rules \rl{TypeLinPlus}, which requires both arguments to be linear, and \rl{TypeLinMul}, which requires exactly the right-hand argument to be linear.

Less conventionally, we also enforce that every non-linear variable is used at least once (by an explicit $\drop$ if needed).
Thus even the non-linear fragment of \Lone\ requires a substructural (if not \emph{linear}, per se) type system.
We track $\drop$ of non-linear variables because we have to charge for it in our cost model (see \Cref{sec:charge-for-drop} for a more complete rationale); we do not track non-linear duplication because we do not charge for $\dup$.

Since our requirement for non-linear variables is to use each at least once, we often pattern-match a non-linear type environment as $\Gamma_1 \cup \Gamma_2$ (e.g., in \rl{TypeLet}) or $\cup \Gamma_i$, implying ``$\Gamma_i$ contains exactly the free variables of $e_i$; the incoming environment $\Gamma$ must be the (not necessarily disjoint) union of the $\Gamma_i$.''
Similarly, when speaking of the linear environment, we write $\dot \Gamma_1 \uplus \dot \Gamma_2$ to imply that in this case the union must be disjoint.
In the end, the actual use of variables is enforced by each leaf rule requiring that the incoming environment contain exactly the variables referenced by that leaf and no others.
In practice, one would implement such a type system by inspecting the free variables of subexpressions rather than by guessing and checking a partition of the environment, but we elide that consideration from the rules for the sake of brevity.

\subsubsection{Polymorphism}

We define linear zero, linear addition, and linear multiplication to be type-indexed---the same operation will type-check with different types in different places as needed (rules \rl{TypeLinZero}, \rl{TypeLinPlus}, \rl{TypeLinMul}).
This is justified because every type that a linear variable can have defines a unique vector space isomorphic to $\R^n$ for some $n$.
Operationally, the zero (respectively, summation and scaling) just happens elementwise, recurring into tuples as needed.

The reason to make linear operations type-indexed like this is that $\drop$ and $\dup$ are naturally type-indexed (i.e., $\drop$ can drop a value of any type), but they transpose to zero and addition, respectively.
We could also type-index non-linear operations, but we don't need to, and it would cease to be justified if \Lone\ were extended to include sum types.
Thus, \rl{TypeLit}, \rl{TypePrim1} and \rl{TypePrim2} call for the $\R$ type.

\subsection{Cost Model}
\label{sec:l1-cost}

The vast automatic differentiation literature often makes efficiency preservation claims, but fails to make the cost model precise.
As it turns out, the cost model required for AD to preserve efficiency is not necessarily obvious.
Here, we state \emph{a} model explicitly, albeit one that has been specifically tailored to make our cost preservation claims hold.
While somewhat unsatisfactory (e.g., we don’t base it on a well-established abstract machine), we think it is still a worthwhile contribution, as it aids the reader in judging the way our method modifies the run-time more precisely.
And, apart from the two finer points we elaborate on below, it could be considered a fairly standard model for an eager language with call-by-value semantics.

Specifically, we choose the following call-by-value cost model for \Lone:
\begin{itemize}
    \item Every non-linear primitive costs 1.
    \item Linear addition and linear multiplication each cost 1 per scalar present in the result.
    \item The $\drop$ operation costs the cost of its subexpression, plus 1 for every scalar of real type (linear or non-linear) present in the argument.
    \item Applying a function costs evaluating its body on the values of its arguments.
    \item The \ttt{let} form costs evaluating its bound expression and then evaluating its body.  This models call-by-value evaluation. 
    \item All other syntactic forms cost 0.
\end{itemize}
Two specific choices bear some elaboration:

\subsubsection{Costly Dropping of Variables}
\label{sec:charge-for-drop}
Why do we charge for $\drop$, when programmers generally think that doing nothing with a variable is free?  Because we need it for work to be preserved.

The fundamental problem with work preservation of reverse-mode AD is that fan-out (i.e., $\dup$) is intuitively costless, but transposes to addition, which is intuitively costly.
We could solve that problem by charging for $\dup$ of linear variables, but then forward differentiation wouldn't be (locally) work-preserving unless we charged for $\dup$ of non-linear variables as well.
Instead, we follow the observation that charging for $\dup$ should have no asymptotic effect, because usually both duplicates would participate in some other costly operation.
In order to make it so, we must charge for $\drop$ (of both linear and non-linear variables), to make it into a costly operation; and doing that indeed suffices for the work preservation proofs to go through.
Another reason why charging for $\drop$ instead of $\dup$ is more natural, is that $\drop$ should be uncommon, or even absent from non-pathological programs since it is only necessary in presence of dead code.

\subsubsection{Costless Tuples}
Why do we not charge for constructing or unpacking tuples, or binding or referencing variables, which would correspond to costly allocation and access of memory in a real implementation?
The main reason is that reverse-mode AD actually does increase the asymptotic memory footprint of a program.
In our system, this is visible in the \rl{ULet} and \rl{UDef} rules in \Cref{sec:partial}, which drastically increase the lifetimes of intermediate non-linear variables (including across function calls).

\subsection{Linearity}
\label{sec:l1-linearity}

Every well-typed \Lone\ expression is \vocab{indexed linear} when viewed as a function from its environment to its return value(s).
We illustrated the dependence and linearity structure in \Cref{fig:dependencies}; now we formalize and prove it in \Cref{thm:linearity}.
Specifically, a \Lone\ expression $e$ defines a collection of algebraically
linear functions---from its free linear variables to its linear outputs---indexed
by values of its free non-linear variables.
The indexing is non-trivial in general because linear multiplication takes a non-linear argument on the left.

\begin{theorem}
\textbf{\Lone\ expressions are indexed linear.}
Consider an expression $e$ of \Lone.
Consider also any set of well-typed values $\many{\elval{x}}$ for the non-linear variables $\many v$ free in $e$, and
any two sets of well-typed values $\dbar{\elval{x}}$ and $\dbar{\elval{y}}$ for the linear variables $\dbar v$ free in $e$; and any scalar $c$.
We write $e[\many{\elval{x}}; \dbar{\elval{x}}]$ for the result of evaluating $e$ with $\many v$ bound to $\many{\elval{x}}$ and $\dbar v$ bound to $\dbar{\elval{x}}$.
We use subscripts as indexing here, so $e[\many{\elval{x}}; \dbar{\elval{x}}]_k$ is the $k$-th result of evaluating $e$ on $\many{\elval{x}}$ and $\dbar{\elval{x}}$.
Then:

\begin{enumerate}
    \item The total work of evaluating $e[\many{\elval{x}}; \many{\delval{x}}]$ is independent of the values $\many{\elval{x}}$ and $\many{\delval{x}}$; \label{claim:work-from-syntax}
    \item The non-linear results are independent of $\mdelval{x}$, i.e., $e[\many{\elval{x}}; \many{\delval{x}}]_k = e[\many{\elval{x}}; \many{\delval{y}}]_k$
    for each non-linear result $k$; \label{claim:non-linear-independence}
    \item The linear results are linear in $\mdelval{x}$, i.e., $e[\many{\elval{x}}; \many{\delval{x} + \delval{y}}]_l = e[\many{\elval{x}}; \many{\delval{x}}]_l + e[\many{\elval{x}}; \many{\delval{y}}]_l$ and $e[\many{\elval{x}}; \many{c \cdot \delval{x}}]_l = c \cdot e[\many{\elval{x}}; \many{\delval{x}}]_l$ for each linear result $l$. \label{claim:linearity}
\end{enumerate}
The equality, addition and scaling here are mathematical operations over reals, and we assume the denotation of arithmetic in \Lone\ to be carried out on infinite-precision reals. Given approximate (e.g., floating-point) semantics, the equalities might not be exact.
\label{thm:linearity}
\end{theorem}

Note that Claim 1 of this theorem, while true for \Lone, is not future-proof.  A variant of the language that supported (non-linear) conditionals or recursion would need to weaken Claim 1 to allow work (including termination) to depend on the non-linear values $\many{\elval{x}}$, though work should remain independent of the linear values $\many{\delval{x}}$.

\begin{proof}
By structural induction, first on the program and then on the syntax of expressions and function bodies.
Work-independence (Claim 1) follows immediately from the lack of conditional control flow in the language.

The proofs of Claims 2 and 3 are similar enough that we discuss them jointly.
The interesting case is multi-value \ttt{let}, so we consider it first. 
The inductive hypothesis asserts that the non-linear values returned by the bound expression are independent of the linear free variables thereof, and the returned linear values satisfy the additive and scaling laws.
Since the set of free variables of the body expression is a union of free variables of the entire expression and the let bound values, we can again invoke the inductive hypothesis to arrive at the desired conclusion.

Tuple unpacking is entirely analogous to the multi-value \ttt{let}.
Function application follows directly from induction on the function body.\footnote{This is why we need the outer induction on the whole program rather than just structural induction on the syntax of a single expression.}
The $\drop$ form is trivial, and all other forms can be verified directly because they do not have subexpressions.
In particular, indexed linearity depends upon the only primitive operations available to the linear fragment of \Lone\ being zero, plus, and scaling by a non-linear constant; and the presence of actually non-linear primitives in the non-linear fragment of \Lone\ is what prevents us from strengthening the linearity claims to the non-linear variables.
\end{proof}

\begin{corollary}
\label{thm:zero-values}
Every closed term in \Lone\ returns 0 for all its linear results.
\end{corollary}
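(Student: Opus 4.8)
The plan is to read this off directly from the scaling law, Claim~\ref{claim:linearity-scaling} of \Cref{thm:linearity}. By definition a closed term $e$ has no free variables of either kind; in particular it has no free linear variables $\dot x_j$. The first step is to observe that, with no linear inputs present, multiplying the (empty) linear environment by a scalar $c$ is a vacuous operation: the evaluation $e[\,;\,]$ runs against the same empty assignment regardless of $c$.

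Next I would instantiate Claim~\ref{claim:linearity-scaling} in the closed case. That claim asserts $e[\elval{x}_i; c\cdot\delval{x}_j]_l = c\cdot e[\elval{x}_i;\delval{x}_j]_l$ for every linear result position $l$ and every scalar $c$. When both variable lists are empty, the left-hand side collapses—by the observation above—to the unscaled evaluation $e[\,;\,]_l$, so the identity becomes $e[\,;\,]_l = c\cdot e[\,;\,]_l$ for all $c$. Choosing $c=0$ (indeed any $c\neq 1$) forces $e[\,;\,]_l = 0$, which is exactly the statement that every linear result of a closed term is zero. Alternatively, the additivity law (Claim~3) yields the same conclusion via $e[\,;\,]_l = e[\,;\,]_l + e[\,;\,]_l$, but the scaling route with $c=0$ is the most direct.

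There is no real obstacle here; the corollary is an immediate specialization of the linearity theorem. The only point requiring a moment's care is interpreting the scaling identity when the linear environment is empty—one must be comfortable reading ``scale the free linear variables by $c$'' as doing nothing in that case, so that the left-hand side reduces to the unscaled evaluation and the identity pins the result to a fixed point of multiplication by every scalar, which can only be $0$.
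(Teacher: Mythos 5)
Your proof is correct and matches the paper's intent: the corollary is stated without an explicit proof precisely because it is the immediate specialization of \Cref{thm:linearity} to a term with no free linear variables (a linear map on the trivial domain must be zero), which is exactly the argument you give via the scaling law with $c=0$ (or equivalently additivity). No gaps.
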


\begin{proof}
If the term $e$ is closed, we can vacuously scale all of the free linear variables of $e$ without changing the evaluation of $e$.  All the linear results must therefore be 0.  Formally, Claim 3 gives us $e[;]_l = 2 e[;]_l = 0$ for every linear result position $l$ of $e$.
\end{proof}

In particular, for the statement of \Cref{thm:linearity} not to be vacuous, one must allow $\mdelval{x}$ and $\mdelval{y}$ to be non-zero without asking where those non-zero values came from.
This is ok, though, because we are generally interested in \emph{open} terms of \Lone, such as the bodies of \Lone\ functions.  In that case, we can assume the non-zero linear values are provided externally through the substitution of free variables.

\begin{lemma}
The work a \Lone\ function or expression does is at least the number of real scalars in its linear inputs, less the number of real scalars in its linear outputs.
\label{lem:dead-code}
\end{lemma}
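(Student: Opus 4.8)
The plan is to prove the bound by structural induction, with an outer induction on the program (so that function application can appeal to the already-established bound for the callee, via \rl{TypeApp}) and an inner induction on the syntax of expressions and function bodies. It is convenient to state the claim for an arbitrary expression $e$, not just a function body: writing $\mathrm{lin}_{\mathrm{in}}(e)$ for the number of real scalars appearing among the \emph{linear free variables} of $e$ and $\mathrm{lin}_{\mathrm{out}}(e)$ for the number of real scalars among its linear results, I would show
\[
\mathrm{work}(e) \;\geq\; \mathrm{lin}_{\mathrm{in}}(e) - \mathrm{lin}_{\mathrm{out}}(e).
\]
For a function this specializes to the declared linear parameters and results, since by \rl{TypeDef} the body's free linear variables are exactly the linear parameters. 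The leaf cases follow directly from the cost model: linear variable reference, $\dot 0$, linear tuple construction, $\dup$, and multi-value return all have cost $0$ while never having fewer output scalars than input scalars (for instance $\dot 0$ and $\dup$ only \emph{create} linear scalars), so the right-hand side is $\leq 0$; linear addition $\dot v_1 + \dot v_2$ costs exactly one per result scalar, which is $2s - s = s$ for an operand type with $s$ scalars, meeting the bound with equality; right-linear multiplication $v * \dot v$ costs $s$ with one input and one output of $s$ scalars, so the bound is slack; and purely non-linear leaves have no linear inputs or outputs at all.

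The two cases that carry the argument are multi-value \ttt{let} and $\drop$. For $\letx{v_i;\dot v_j}{e_1}{e_2}$, the crucial observation is \emph{conservation of linear scalars across the binding}: by the disjointness ($\uplus$) in \rl{TypeLet} and exactly-once usage, the free linear variables of the whole expression are those of $e_1$ together with those of $e_2$ other than the bound $\dot v_j$, while the $\dot v_j$ contribute the same scalar count to both $\mathrm{lin}_{\mathrm{out}}(e_1)$ and $\mathrm{lin}_{\mathrm{in}}(e_2)$. Adding the two inductive hypotheses, that shared count cancels, leaving exactly $\mathrm{lin}_{\mathrm{in}}(e_1)$ plus the free linear scalars of $e_2$ outside $\dot v_j$, minus $\mathrm{lin}_{\mathrm{out}}(e_2)$, which is precisely $\mathrm{lin}_{\mathrm{in}} - \mathrm{lin}_{\mathrm{out}}$ of the whole \ttt{let}. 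Both unpacking \ttt{let}s are handled the same way and are even simpler, since they merely repackage scalars without changing any counts. For $\drop(e)$ the output is empty, so I must show $\mathrm{work}(\drop(e)) \geq \mathrm{lin}_{\mathrm{in}}(e)$; the cost model charges $\mathrm{work}(e)$ plus one per real scalar in $e$'s results, which in particular covers the $\mathrm{lin}_{\mathrm{out}}(e)$ dropped linear scalars, so combining with the inductive hypothesis $\mathrm{work}(e) \geq \mathrm{lin}_{\mathrm{in}}(e) - \mathrm{lin}_{\mathrm{out}}(e)$ yields the claim. This is exactly where charging for $\drop$ (rather than, say, for $\dup$) earns its keep.

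Function application $f(v_i;\dot v_j)$ appeals to the outer program induction: its cost is that of evaluating $f$'s body, whose linear input and output scalar counts equal those of the application (identical parameter and result types), so the body's instance of the lemma transfers directly. The remaining forms have no subexpressions and are checked by inspection as above. I expect the main obstacle to be stating the \ttt{let} bookkeeping precisely enough that the cancellation is rigorous---in particular, justifying from the linear type system that every bound linear variable reappears exactly once among the free linear variables of the body (possibly inside a $\drop$), so that $\mathrm{lin}_{\mathrm{out}}(e_1)$ and the bound-variable portion of $\mathrm{lin}_{\mathrm{in}}(e_2)$ genuinely coincide and cancel.
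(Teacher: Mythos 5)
Your proof is correct and rests on the same observation as the paper's: linear scalars are conserved by every construct except linear addition and $\drop$, each of which pays one unit per net scalar consumed, and exactly-once usage guarantees every linear input scalar is either returned or consumed by such a paying operation. The paper compresses this into a two-sentence global accounting argument, whereas you unfold it into the structural induction (with the \ttt{let} cancellation and the outer induction for \rl{TypeApp}) that makes that accounting rigorous; the content is the same.
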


\begin{proof}
The only linear operations in \Lone\ that return fewer scalar results than they consume are \ttt{+} and $\drop$, each of which costs 1 per net scalar consumed.  Since all linear variables must be either consumed or returned, the result follows.
\end{proof}

We state this fact here because it's a curious general property of our cost model.  We will use it only to argue that eliminating useless code (e.g., by short-circuiting in the rule for transposing $\drop(e)$) does not increase work (!).

\subsection{Linearity Erasure}
\label{sec:linearity-erasure}

Because we will occasionally need it, we define a linearity erasure transform $\Le$ on \Lone.
Linearity erasure is just what it sounds like: remove the linearity information by turning all linear variables and expressions into non-linear ones.
Syntactically, this just means
\begin{itemize}
    \item For every type at which linear addition, multiplication, and zero are used in the program, define a fresh function that carries that operation out in the non-linear fragment on \Lone\ (using unpacking and repacking as needed);
    \item Replace every linear addition, multiplication, and zero with a call to the corresponding function;
    \item Rename every reference to a variable produced by $\dup$ to refer to the input of that $\dup$ instead;
    \item Remove all $\dup$ operations; and
    \item Move all the semicolons all the way to the right.
\end{itemize}
In the interest of space, we do not give rules for this transform.

\begin{lemma}
Linearity erasure preserves the semantics of an expression, and preserves work in our cost model.
\end{lemma}
\begin{proof} By inspection. \end{proof}

\begin{figure}
\centering
\begin{minipage}{.28\columnwidth}%
\centering
\scalebox{.85}{%
\begin{tikzpicture}[
  y=-1cm,
  scale=1.3,
  decoration=snake,
  xx/.style={circle, draw=black, inner sep=0pt, minimum size=9mm, semithick},
  dd/.style={}
]
\node[xx] (xp) at (0, 0) {$\melval{x}$};
\node[xx] (yp) at (0, 2) {$\melval{y}$};

\draw [->] (xp) -- (yp) node[midway,left] {$e$};
\end{tikzpicture}
}%
\subcaption{Original expression $e$}
\label{fig:dataflow:expr}
\end{minipage}
\hfill
\begin{minipage}{0.34\columnwidth}
\centering
\scalebox{.85}{%
\begin{tikzpicture}[
  y=-1cm,
  scale=1.3,
  decoration=snake,
  xx/.style={circle, draw=black, inner sep=0pt, minimum size=9mm, semithick},
  dd/.style={}
]
\node[xx] (xp) at (0,   0) {$\melval{x}$};
\node[xx] (xt) at (1.5, 0) {$\mdelval{x}$};
\node[xx] (yp) at (0,   2) {$\melval{y}$};
\node[xx] (yt) at (1.5, 2) {$\mdelval{y}$};

\draw [->] (xp) -- (yp) node[midway,left] {$e$};
\draw [->] (xp) -- (yt) node[midway,above] { $\ \ \ \ \ \left.J e\right|_{\melval{x}}$ };
\draw [-o] (xt) to (yt);
\end{tikzpicture}
}%
\subcaption{Differentiated expression $\J(e)$}
\label{fig:dataflow:jvp}
\end{minipage}
\hfill
\begin{minipage}{.30\columnwidth}
\centering
\scalebox{.7}{%
\begin{tikzpicture}[
  y=-1cm,
  scale=1.3,
  decoration=snake,
  xx/.style={circle, draw=black, inner sep=0pt, minimum size=9mm, semithick},
  dd/.style={}
]
\node[xx] (xp) at (0,   0) {$\melval{x}$};
\node[xx] (xt) at (1.5, 0) {$\mdelval{x}$};
\node[xx] (yp) at (0,   1.5) {$\melval{y}$};
\node[xx] (yt) at (1.5, 1.5) {$\mdelval{y}$};
\node[xx] (zp) at (0,   3) {$\melval{z}$};
\node[xx] (zt) at (1.5, 3) {$\mdelval{z}$};

\draw [->] (xp) -- (yp) node[midway,left] {$e_1$};
\draw [->] (xp) -- (yt) node[midway,above] { $\ \ \ \ \ \left.J e_1\right|_{\melval{x}}$ };
\draw [-o] (xt) to (yt);
\draw [->] (yp) -- (zp) node[midway,left] {$e_2$};
\draw [->] (yp) -- (zt) node[midway,above] { $\ \ \ \ \ \left.J e_2\right|_{\melval{y}}$ };
\draw [-o] (yt) to (zt);
\end{tikzpicture}
}%
\subcaption{Sequence of $\J(e)$s}
\label{fig:dataflow:jvpstack}
\end{minipage}
\caption{Dataflow diagrams of
(a) a purely non-linear expression in \Lone{} and
(b) its derivative,
plus (c) a derivative made of a sequential composition
stacked vertically.
The primal (non-linear) results
of a derivative depend only only on its primal (non-linear) inputs,
whereas the tangent results depend
non-linearly on the primal (non-linear) inputs
and linearly on the tangent (linear) inputs.
Note the distinction between $\J$, the code transform we define, and $J$, the Jacobian of an expression $e$.}
\label{fig:dataflow}
\end{figure}

\section{Automatic differentiation}
\label{sec:jvp}

Our first step is a code transformation $\J$ performing forward-mode automatic differentiation\footnote{$\J$ corresponds to the \texttt{jax.jvp} transform from JAX.}.
Specifically, $\J$ transforms the non-linear fragment of \Lone\ into the full language as follows.
For any purely non-linear $e$ in \Lone, and input $\many{\elval{x}}$, $\J(e)[\many{\elval{x}}; \many{\delval{x}}]$ has non-linear results computing the same output as $e[\many{\elval{x}};]$, and linear results computing the directional derivative of $e$ at the point $\many{\elval{x}}$ in the direction $\many{\delval{x}}$.
We illustrate the data flow produced by $\J$ in \Cref{fig:dataflow}.
Here we view the expression $e$ as a function from its free variables $\many v$ to its results.
A code example appears in \Cref{fig:jvp-sin}.

To compute a forward derivative end-to-end, we need to supply the initial direction of differentiation (1 for an $\R \to \R^m$ primal function).
Since we cannot write a non-zero linear literal in well-typed \Lone\ program, we have to use the linearity erasure transform $\Le$ from \Cref{sec:linearity-erasure}.  For $F: \R \to \R^m$,
\[ F(x), \frac{d}{dx} F(x) = \Le(\J(F))(x, 1; ). \]

Part of the point of this whole exercise is that $\J$ will be very familiar to students of automatic differentiation;
it's just a forward-mode transformation that computes tangents in tandem with primals.
The rules in \Cref{fig:jvp} are completely standard, with the one wrinkle that the derivative is emitted into the linear fragment of \Lone.  Ergo, the result being well-typed (substructurally) in \Lone\ means that we have a proof that the derivative of a function is algebraically linear (with respect to the direction in which the derivative is taken).

\begin{figure}\codeexample
\begin{minipage}{0.45\textwidth}
\begin{align*}
\defxArrow
  {& \ttt{f}^{1,0}_{1,0}}
  {u: \R; }
  {\R; } {}
  {\\
  & \letx {v: \R; } { \sin(u) } {\\
  & \letx {w: \R; } { -v } {\\
  & \parens{w; }}}}
\end{align*}
\end{minipage}
$\transformsto{\J}$
\begin{minipage}{0.45\textwidth}
\begin{align*}
\defxArrow
  {& {\ttt{f}^\J}^{1,1}_{1,1}}
  {u: \R; \dot u: \R}
  {\R; \R} {}
  {\\
  & \letx {v: \R; } { \sin(u) } {\\
  & \letx {du: \R; } { \cos(u) } {\\
  & \letx {; \dot v: \R} { du * \dot u } {\\
  & \letx {w: \R; } { -v } {\\
  & \letx {; \dot w: \R} { -\dot v } {\\
  & \parens{w; \dot w}}}}}}}
\end{align*}
\end{minipage}
\caption{Example of forward differentiation with $\J$, transforming $f(u) = -\sin(u)$ on the left into $\J(f)$, which computes both the original $f$ and its directional derivative.  The input function $f$ must be coded in \Lone\ as all non-linear.  The result uses both the linear and non-linear fragments of \Lone.}
\label{fig:jvp-sin}
\end{figure}

This is the first step in our separation of concerns: differentiation proper is confined to the relatively simple $\J$, whereas arranging to run the derivative backward to obtain reverse-mode AD is the province of $\Pa$ and $\T$ (in \Cref{sec:partial} and \Cref{sec:transposition}, respectively).
The latter two know only about linearity, not about differentiation, nor about any special relationship between the non-linear (primal) and the linear (tangent) computations.
The type system of \Lone\ is the abstraction boundary that lets differentiation and transposition be implemented independently.

\subsection{Details}

\subsubsection{Non-linear Input}
We define $\J$ to operate only on the purely non-linear subset of \Lone.  This avoids perturbation confusion problems: all linear variables in \Lone\ refer to the same perturbation.
We can still get nested AD by alternating $\J$ and the linearity erasure transform $\Le$.

\begin{figure}\footnotesize
\renewcommand{\arraystretch}{3}
\begin{tabular}{cc}
\multicolumn{2}{c}{
\framebox{$\jvp \Sigma e \defunc e'$}} \\
\multicolumn{2}{c}{
$\Infer{JRet}{\many v\ \trm{distinct}}
  {\jvp {\many{v \to \dot v}} {\parens{\many v;}} \defunc \parens{\many v; \dbar v}}$}\\
\multicolumn{2}{c}{
$\Infer{JLet}
  {     \jvp {\Sigma_1, \many{u \to \dot w}} {e_1} \defunc e_1'
  \quad \jvp {\Sigma_2, \many{u \to \dot z}, \many{v \to \dot v}} {e_2} \defunc e_2'\\ \\
  \quad \dbar{v}, \dbar{w}, \dbar{z}\ \trm{fresh}
  \quad \many{u:\tau}\ \trm{free in both $e_1$ and $e_2$}
  }
  {\jvp {\Sigma_1 \uplus \Sigma_2 \uplus \{\many{u \to \dot u}\}} {\letx {\many{v:\sigma};} {e_1} {e_2} } \\ \defunc
  \underbar{$\letx {;\dot w:\typej{\tau}, \dot z:\typej{\tau}} {\parens{;\dup(\dot u)}\ \mbox{} } $} {
    \letx {\many{v:\sigma}; \many{\dot v:\typej{\sigma}}} {e_1'} {e_2'} } }$}\\
\multicolumn{2}{c}{
$\Infer{JUnpackA}
  {     \jvp {\Sigma, \many{w \to \dot w}} e \defunc e'
  \quad \dbar w\ \trm{fresh}
  \quad v\ \trm{not free in $e$}
  }
  {\jvp {\Sigma, v \to \dot v} {\letx {\otimes \many w: \otimes \many \tau} v e} \\ \\ \defunc
  \letx {\otimes \many w: \otimes \many \tau} v {\letx {\otimes \dbar w: \otimes \many{\typej{\tau}}} {\dot v} e'}}$}\\
$\Infer{JApp}{\many v\ \trm{distinct}}
  {\jvp {\many{v \to \dot v}} {f\parens{\many{v};}} \defunc f^\J \parens{\many v; \dbar v}}$ &
$\Infer{JVar}{~~}
  {\jvp {v \to \dot v} v \defunc \parens{v; \dot v}}$\\
$\Infer{JTup}{\many v\ \trm{distinct}}
  {\jvp {\many{v \to \dot v}} {\otimes \many v} \defunc \parens{\otimes \many v; \otimes \dbar v}}$ &
$\Infer{JLit}{~~}
  {\jvp {\varepsilon} {l} \defunc \parens{l; 0_{\typej{\R}}}}$\\
\multicolumn{2}{c}{
$\Infer{JPrimSin}{~~}
  {\jvp {v \to \dot v} {\sin(v)} \defunc \parens{\sin(v); \cos(v) * \dot v} }$}\\
\multicolumn{2}{c}{
$\Infer{JPrimMul}{v_1, v_2\ \trm{distinct}}
  {\jvp {v_1 \to \dot v_1, v_2 \to \dot v_2} {v_1 * v_2} \defunc
   \parens{v_1 * v_2; v_1 * \dot v_2 + v_2 * \dot v_1} }$}\\
\multicolumn{2}{c}{
$\Infer{JDrop}
  {     \jvp \Sigma {e} \defunc e'}
  {\jvp {\Sigma} {\drop (e)} \defunc \drop (e')}$}\\
\multicolumn{2}{c}{
\framebox{$\J(f) \defunc f^\J$}} \\
\multicolumn{2}{c}{
$\Infer{JDef}
  {     \jvp {\many{v \to \dot v}}{e} \defunc e'
  \quad \dbar v\ \trm{fresh}}
  {\J (\defx f {\many{v:\tau};} {\many \sigma;} e ) \defunc
    \defx {f^\J} {\many{v:\tau}; \many{\dot v: \typej{\tau}}} {\many{\sigma};\many{\typej{\sigma}}} {e'}}$}
\\
\end{tabular}
\caption{Forward-mode automatic differentiation in \Lone.
The constructed expression $e'$ emits primal and tangent results together as a multi-value return.
The argument $\Sigma$ maps primal variable names to the corresponding tangent variable names.
Note in the result of rule \rl{JPrimMul}, the multiplication on the left is non-linear, whereas the two multiplications on the right are right-linear.
Several near-redundant rules are omitted; see text.
}
\label{fig:jvp}
\end{figure}

\subsubsection{Introducing $\dup$}
The non-linear fragment of \Lone\ allows variables to be referenced multiple times, but the linear fragment does not.
Ergo, whenever a non-linear variable is re-used, we must introduce $\dup$ operations on the corresponding linear variables.
The \rl{JLet} rule in \Cref{fig:jvp} spells this out: the rule detects which variables $\many u$ are repeated; constructs the needed fresh variables $\many {\dot w, \dot z}$; introduces the duplication operations $\many{\dot w, \dot z = \dup(\dot u)}$; and takes care that the tangent of $\many{u}$ in $e_1$ is $\dbar w$, whereas in $e_2$ it's $\dbar z$.

One instance seeming sufficient, \Cref{fig:jvp} omits similar manipulations for the other rules.  The full system has a rule \rl{JUnpackB} for the case where the tuple variable $v$ does appear free in the body $e$; it differs from \rl{JUnpackA} only in that the binding for $v$ is propagated to the recursive call, and the tangent $\dot v$ is duplicated for its appearance in the transformed body.
Likewise, \rl{JRet}, \rl{JApp}, \rl{JTup}, and \rl{JPrimMul} have variants that add the necessary $\dup$ calls if any of the non-linear $\many v$ repeat.

\subsubsection{Other Omissions}
We also omit rules for the other non-linear primitives.  They differ from \rl{JPrimSin} and \rl{JPrimMul} only in the derivatives they must insert into the result; we trust that a unary and a binary example are adequately clear.
Finally, there are no rules for linear syntax because $\J$ operates on the non-linear fragment of \Lone.

\subsubsection{Extra Non-linear Computations}
Note that $\J$ in general introduces non-linear computations that are not needed for the original non-linear program---these are the actual partial derivatives.  For example, in \rl{JPrimSin} we see a new $\cos(v)$.  This is bound to a non-linear variable, but used only by the linear result of $\J(\sin)$.
\Cref{fig:ad-full} elided these extra non-linear computations, hiding them in the red arrows between the blue primal blocks and orange derivative blocks.  Our downstream unzipping transformation $\Pa$ treats them as part of the blue blocks, since they are non-linear.

\subsubsection{Tangent Types}
The rules in \Cref{fig:jvp} depend on assuming that any \Lone\ type $\tau$ is also a suitable type for its tangent.
As we have defined \Lone\ to have only real scalars and nested tuples thereof, this proposition is costless; but the correct choice of tangent type becomes more interesting in a language with sum types.
One conventional choice is to still use $\tau$ itself as the type of tangents to $\tau$, though that loses the information that both the primal and the tangent value are necessarily the same variant.
Retaining that information requires a dependent type system; we look forward to the community working out a satisfying one for this purpose.

\subsection{Formalization}

Our theorem about forward differentiation is conventional.
The interesting thing about it is that the $\J$ transformation puts the derivative into the linear fragment of \Lone.
We therefore have a proof obligation that automated derivatives are substructurally linear; but otherwise this is a repetition of known results that forward differentiation is correct, and work-preserving up to a constant factor.

\begin{theorem}
\textbf{Automatic differentiation is total, correct, and work-preserving.}
Consider an expression $e$, well-typed in $\Gamma; \{\}$ in the non-linear fragment of \Lone.
Let $\melval{x}$ be well-typed values for the (necessarily non-linear) free variables $\many{v: \tau}$ of $e$.
Allocate fresh names $\dbar v$ of types $\many{\typej{\tau}}$ for the tangents, and
let $\mdelval{x}$ be well-typed values for them.  Then:
\begin{enumerate}
    \item $e' = \J(\many{v \to \dot v} | e)$ exists and is well-typed in $\Gamma; \many{\dot v : \typej{\tau}}$ in \Lone;
    \item $e'$ does no more than a constant times more work than $e$.  The constant does not depend on $e$, but does depend on the set of primitives and their derivative rules;
    \item The non-linear outputs of $e'[\melval{x}; \cdot]$ are equal to the outputs of $e[\melval{x};]$; and
    \item The linear outputs of $e'[\melval{x}; \mdelval{x}]$ are equal to the matrix-vector product of the Jacobian of $e$ at $\melval{x}$ with the vector $\mdelval{x}$, provided this is true of the primitive derivatives. %
\end{enumerate}
To define the Jacobian for Claim 4, we view the expression $e$ as a function from its free variables $\many v$ to its results.
\label{thm:jvp}
\end{theorem}

We assume these claims hold for all the primitives in a given implementation of \Lone.  In particular, the constant in Claim 2 may not actually be well-determined in a large and user-extensible AD system, because it depends on the cost behavior of every primitive derivative on every set of arguments.  We content ourselves in this paper with proving that $\J$ composes those primitive derivatives correctly, in the sense of introducing no correctness or work errors that are not present in the primitive set already. 

\begin{proof}
The proof proceeds by structural induction on the program.  We work out the \rl{JLet} rule as a template for the others.  The first subtlety is that there may be some non-linear variables, $\many u$ in the notation, that are free in both the bound expression $e_1$ and the body $e_2$.  To construct the parallel linear expression, we must insert explicit $\dup$s for their corresponding linear variables $\dbar u$.  This is notated as $\many{\letx {;\dot w, \dot z} {\parens{; \dup(\dot u)}} {\mbox{}}}$ in \rl{JLet}, implying one \ttt{let} and one $\dup$ per duplicated variable among the $\dbar u$.  This duplication is also why the map from primal variables to their corresponding tangent variables may be non-trivial and needs to be maintained explicitly.  Work preservation (Claim 2) follows because the $\dup$s introduced by \rl{JLet} cost 0 in our cost model.

\newcommand{\jacat}[2]{\left( \left. J{{#1}} \right|_{{#2}} \right)}
Non-linear correctness (Claim 3) follows from the inductive hypothesis because we reconstruct the non-linear part of the computation exactly in the result $\J(e)$.  Linear correctness (Claim 4) is a calculation:
\begin{align*}
    \J(e)[\melval{x}; \mdelval{x}]
    & = \J(e_2)[\melval{x}_{2}, e_1[\melval{x};]; \mdelval{x}_{2}, \J(e_1)[\melval{x}_{1}; \mdelval{x}_{1}] ]
      & \trm{definition of $\J(e)$} \\
    & = \jacat{e_2}{\melval{x}_{2}, e_1[\melval{x};]}(\mdelval{x}_{2}, \J(e_1)[\melval{x}_{1}; \mdelval{x}_{1}]) 
      & \trm{induction on $e_2$} \\
    & = \jacat{e_2}{\melval{x}_{2}, e_1[\melval{x};] }\left( \mdelval{x}_{2}, \jacat{e_1}{\melval{x}_{1}}(\mdelval{x}_{1}) \right)
      & \trm{induction on $e_1$} \\
    & = \jacat{e}{\melval{x}}(\mdelval{x}).
      & \trm{chain rule on $e_2 \circ e_1$}
\end{align*}
The chain rule is for the composition of multivariate functions $e_2 \circ e_1$ augmented with side inputs $\melval{x}_{2}$ and their tangents $\mdelval{x}_{2}$.  Here we somewhat abuse our own notation, writing $\jacat{e}{\cdot}$ for the Jacobian of $e$ at a point, $\J(e)[\cdot; \cdot]$ for the \emph{linear} results of evaluating $\J(e)$, and $e[\cdot;]$ with no linear inputs for the \emph{non-linear} results of evaluating $e$.  The reader must also understand the index $1$ to refer to variables free in $e_1$ and $2$ to refer to those free in $e_2$, with any overlap among linear ones taken care of by the above-mentioned chain of \ttt{let} and $\dup$.

Induction on the program (including previously processed definitions) comes in when checking \rl{JApp}, because we rely on correctness of the transformed callee for correctness of the transformed call site.
The handling of \rl{JUnpack} is analogous to \rl{JLet}, and the other rules are straightforward.
\end{proof}

\section{Unzipping}
\label{sec:partial}

We cannot directly transpose arbitrary \Lone\ programs.  Why not?  In the transposed result we are trying to run the non-linear part forward and the linear part backward.
However, the first sub-expression of a general \Lone\ program may produce both linear and non-linear results.
Do we run that subexpression before or after the remainder?

Fortunately, the non-linear part of a \Lone\ expression
cannot depend on any of its linear inputs.  Ergo, it should be possible to hoist it above the linear part (this section) to just transpose the latter (\Cref{sec:transposition}).

We begin by defining what we want:
\begin{definition}[\Lonestrict]
\label{def:linear-b}
\Lonestrict\ is a sublanguage of \Lone\ with two restrictions:
\begin{itemize}
    \item All functions and expressions of \Lonestrict\ must return either all-linear or all-nonlinear results---no mixing.
    \item Expressions of \Lonestrict\ that produce non-linear results must not read linear variables, not even to $\drop$ them.
\end{itemize}
\end{definition}

\begin{figure}\footnotesize
\renewcommand{\arraystretch}{3}
\begin{tabular}{c}
\framebox{$\opP{\envNL; \envTan}{e} \defunc E; e'; \dot e'$} \\
$\Infer{URet}{~~}
  {\opP {\envNL;\envTan} {\parens{\many v; \dbar v}}
  \defunc \hole ; \parens {\many v;}; \parens {;\dbar v}}$\\
$\Infer{ULet}
  {     \opP {\envNL;\envTan} {e_1} \defunc E_1; e'_1; \dot e'_1
  \quad \opP {\envNL, v_i: \tau_i; \envTan, \dot v_j: \dot \tau_j} {e_2} \defunc E_2; e'_2; \dot e'_2}
  {\opP {\envNL;\envTan} {\letx{\many{v: \tau}; \many{\dot v: \dot \tau}}{e_1}{e_2}}
  \defunc E_1, \letx{\many{v:\tau};}{e_1'}{E_2};\ e_2';\ 
  \letx {; \many{\dot v: \dot \tau}}{\dot e_1'}{\dot e_2'}}$\\
$\Infer{UUnpack}
  {     \opP {\envNL, \many{w: \tau}; \envTan} e \defunc E; e'; \dot e'}
  {\opP {\envNL;\envTan} {\letx{\otimes \many{w}: \otimes \many{\tau};} v e}
  \defunc \letx {\otimes \many{w}: \otimes \many \tau} {v} {E};\ e';\ {\dot e'}}$\\
$\Infer{ULinUnpack}
  {     \opP {\envNL, \many{w: \tau}; \envTan} e \defunc E; e'; \dot e'}
  {\opP {\envNL;\envTan} {\letx{; \otimes \dbar w: \otimes \dbar \tau} {\dot v} e}
  \defunc E;\ e';\ \letx {\otimes \dbar w : \otimes \dbar \tau} {\dot v} {\dot e'}}$\\
$\Infer{UApp}
  {     \vdash \nonlin f \parens{\cdot;} \to \parens{\many \tau, \sigma;}
  \quad \many u, w\ \trm{fresh}}
  {\opP {\envNL;\envTan} {f \parens{\many v; \dbar v} }
  \defunc \letx {\many {u: \tau}, w: \sigma;} {\nonlin{f}(\many v;)} {\hole};\ \parens{\many u;};\ \lin{f}(w; \dbar v)}$\\
$\Infer{ULeaf}
  {e\ \trm{non-linear with no sub-expressions}}
  {\opP {\envNL;\envTan} e
  \defunc \square; e; \parens{;}}$\\
$\Infer{ULinLeaf}
  {\dot e\ \trm{linear with no sub-expressions}}
  {\opP {\envNL;\envTan} {\dot e}
  \defunc \square; \parens{;}; \dot e}$\\
$\Infer{UDrop}
  {\Pa(e) \defunc E; e'; \dot e'}
  {\Pa(\drop (e)) \defunc E; \drop (e'); \drop (\dot e') }$\\
\framebox{$\Pa(f) \defunc \nonlin{f}, \lin{f}$} \\
$\Infer{UDef}{\opP {v_i:\tau_i; \dot v_j:\dot \tau_j} e \defunc E; e'; \dot e'
  \quad \many{w:\pi}\ \trm{are all the names bound by}\ \many v\ \trm{and}\ E\ \trm{and read by}\ \dot e'
  \quad u\ \trm{fresh}}
  {\Pa (\defx {f} {\many{v:\tau}; \many{\dot v:\dot \tau}} {\many \sigma; \dbar \sigma} e) \defunc
  \defx {\nonlin{f}} {\many{v:\tau};} {\many \sigma, \otimes \many \pi;} {E \parens{e', \otimes \parens{\many w}}}, \\
  \defx {\lin{f}} {u : \otimes \many{\pi}; \many{\dot v:\dot \tau}} {;\dbar \sigma}
    { \letx {\otimes \many w: \otimes \many \pi} {u} {\dot e'} }
  }$\\
\end{tabular}
\caption{Unzipping converting \Lone\ to \Lonestrict.
Unzipping transforms a mixed linear and non-linear expression $e$ into a shared non-linear context $E$, a non-linear result expression $e'$, and a linear result expression $\dot e'$.
Note that we always unzip recursively, because \Lonestrict\ does not allow non-linear expressions to read linear variables, even to $\drop$ them.
Most language primitives are handled by the generic rules \rl{ULeaf} and \rl{ULinLeaf}.
}
\label{fig:partial-evaluation}
\end{figure}

\Cref{fig:partial-evaluation} gives the transformation $\Pa$\footnote{$\Pa$ is not surfaced as a public API in JAX, but it is used internally by the AD implementation. The relevant implementation can be found in the file \texttt{jax/interpreters/partial\_eval.py}.} that ``unzips'' any \Lone\ expression or function into a pair of \Lonestrict\ expressions or functions---one that computes only linear results and one that computes only non-linear results.
\Cref{fig:unzip-sin} shows the unzip of an example function.

The result of $\Pa$ on an expression is a bit subtle: we wish for $\Pa$ to split the input expression $e$ into a non-linear fragment $e'$ and a linear fragment $\dot e'$.  However, $e$ may compute a (non-linear) intermediate value that will be used for both linear and non-linear results.  To allow $e'$ and $\dot e'$ to both use that value without having to recompute it, $\Pa(e)$ also produces a context $E$ of let bindings, which represent all the intermediate (non-linear) computations that $e'$ and $\dot e'$ may share.  All the actual work of the non-linear fragment ends up in the context $E$.  The rule \rl{UDef} exposes the variables bound by $E$ as outputs of the emitted non-linear fragment $\nonlin f$, so they can be read without recomputation by the linear fragment $\lin f$.

\begin{figure}\codeexample
\begin{minipage}{0.45\textwidth}
\begin{align*}
\defxArrow
  {& {\ttt{f}^\J}^{1,1}_{1,1}}
  {u: \R; \dot u: \R}
  {\R; \R} {}
  {\\
  & \letx {v: \R; } { \sin(u) } {\\
  & \letx {du: \R; } { \cos(u) } {\\
  & \letx {; \dot v: \R} { du * \dot u } {\\
  & \letx {w: \R; } { -v } {\\
  & \letx {; \dot w: \R} { -\dot v } {\\
  & \parens{w; \dot w}}}}}}}
\end{align*}
\end{minipage}
$\transformsto{\Pa}$
\begin{minipage}{0.45\textwidth}
\begin{align*}
\defxArrow
  {& \nonlin{\ttt{f}^\J}^{1,0}_{2,0}}
  {u: \R; }
  {\R, \otimes \R; } {}
  {\\
  & \letx {v: \R; } { \sin(u) } {\\
  & \letx {du: \R; } { \cos(u) } {\\
  & \letx {w: \R; } { -v } {\\
  & \letx {t: \otimes \R; } {\otimes \parens{du}} {\\
  & \parens{w, t;}}}}}}\\
\defxArrow
  {& \lin{\ttt{f}^\J}^{1,1}_{0,1}}
  {t: \otimes \R; \dot u: \R}
  {; \R} {}
  {\\
  & \letx {\otimes du; } {t} {\\
  & \letx {; \dot v: \R} { du * \dot u } {\\
  & \letx {; \dot w: \R} { -\dot v } {\\
  & \parens{; \dot w}}}}}
\end{align*}
\end{minipage}
\caption{Example of unzipping.  On the left we have the derivative $\J(f)$ of $f(u) = -\sin(u)$ from \Cref{fig:jvp-sin}, and on the right the non-linear and linear functions it unzips to, $\nonlin{\J(f)}$ and $\lin{\J(f)}$.
In AD terminology, $\nonlin{\J(f)}$ is the forward phase, which computes the original function $f$ and returns all the intermediates, or the \vocab{tape}, (in this case the variable $du$) needed for the derivative.
The function $\lin{\J(f)}$ is called the linearization in \citet{frostig2021decomposing}.
It consumes the tape produced by the forward phase and computes (in substructurally and therefore algebraically linear fashion) the forward derivative of $f$; in \Cref{fig:transpose-sin} (\Cref{sec:transposition}) we will transpose it to obtain the reverse phase for computing the derivative of $f$ in reverse mode.
}
\label{fig:unzip-sin}
\end{figure}

The context $E$ is the ``tape'' (as it is often called \citep{griewank2008evaluating}) that gives reverse-mode AD its work-preservation.
In the example of \Cref{fig:unzip-sin} it consists of the single variable $dx$.
One of our contributions is disentangling the construction of the tape from AD proper.
The time-space tradeoff the tape represents is handled by the unzipping transformation, and neither (forward) differentiation nor transposition need concern themselves with it.

\subsection{Relation to Partial Evaluation}

Unzipping can be seen as a form of partial evaluation (PE).  Specifically, $\Pa$ symbolically partially evaluates $f$ with respect to its non-linear arguments, leaving $\lin f$ as the residual.  The first difference with traditional PE \citep{jones1993partialeval} is that we do not require concrete values for the non-linear arguments of $f$.  This is because our goal for unzipping is not to use static information to make the residual faster, but simply to separate the linear fragment so it can be transposed.

The second difference is that $\Pa$ always produces exactly two specializations of any function definition, whereas general partial evaluation may produce arbitrarily many.
This is because we are always unzipping exactly the linear from the non-linear variables of every function, and \Lone\ does not permit the same function definition to be used with different arguments being treated linearly or non-linearly (even if the underlying computation is mathematically linear in different sets of its inputs).

We hope this identification of unzipping with partial evaluation may spur further advances in both PE and AD.  In particular, checkpointing strategies developed for memory efficiency by the AD community might prove useful for general-purpose partial evaluation in memory-limited settings.
In our system, checkpointing materializes thus: our \rl{ULet} rule remains correct (though no longer work-preserving) if modified not to save a non-linear intermediate, and recompute it in the linear expression instead.
Our transposition transformation $\T$ (\Cref{sec:transposition}) will then retain the non-linear recomputation while reversing the direction of the linear computation, implementing checkpointing.
Perhaps similar ideas can be applied to partial evaluation more broadly.

\subsection{Formalization}

Given an expression $e$ well-typed in $\Gamma; \dot \Gamma$ in \Lone, our transformation $\Pa(e)$ produces a context $E$ (syntax in \Cref{fig:linear-a}), a non-linear expression $e'$, and a linear \vocab{residual} expression $\dot e'$.  We need the free non-linear variables $\many w$ of the expression $\dot e'$, with their types $\many \pi$.  Then we define
\begin{itemize}
    \item The \vocab{tape} $\many w$.  These are variables carrying the needed non-linear information from $\Gamma$ and $E$ to $\dot e'$.
    \item The \vocab{non-linear partial} $E\parens{e', \many w}$.  This is the expression formed by allocating fresh variables $\many u$ for the results of $e'$ and plugging $\letx {\many u;} {e'} {\parens{\many u, \many w;}}$ into the hole in $E$.  The non-linear partial is a purely non-linear expression that returns the results of $e'$ together with the tape.
    \item The \vocab{reconstruction} $E\parens{e', \many w; \dot e'}$.  Given fresh variables $\many u$ and $\dbar u$ for all the results of $e$, this is the expression $\letx {\many u, \many w;} {E \parens{e', \many w}} {\letx {; \dbar u} {\dot e'} {\parens{\many u; \dbar u}}}$ that should be equivalent to $e$ if the non-linear partial and the residual are correct.  Note that here we bind $\many w$ to themselves: the point is that we arrange for the context $E$ not to be in scope in $\dot e'$, and carry all the needed information explicitly in the tape $\many w$. 
\end{itemize}

\begin{theorem}
\textbf{Unzipping is total, correct, and work-preserving.}
Consider any \Lone\ expression $e$ well-typed in $\Gamma; \dot \Gamma$ with $\many w$ and $\many \pi$ as above.  Then

\begin{enumerate}
    \item $E; e'; \dot e' = \Pa(e)$ exist;
    \item The non-linear partial $E \parens{e', \many w}$ is well-typed in $\Gamma; \{\}$ in \Lonestrict;
    \item The residual $\dot e'$ is well-typed in $\many {w: \pi}; \dot \Gamma$ in \Lonestrict; and
    \item The reconstruction $E\parens{e', \many w; \dot e'}$ is well-typed in $\Gamma; \dot \Gamma$ in \Lone\ and computes the same result with the same work as $e$.
\end{enumerate}
\label{thm:unzip}
\end{theorem}

\begin{proof}
The proof consists of our usual structural induction.
One subtlety is well-typedness of the non-linear partial $E \parens{e', \many w}$ when $e$ is a mixed $\ttt{let}$ expression (rule \rl{ULet}).  We can hoist $E_2$ and $e_2'$ past the linear variables $\dbar v$ because recursive unzipping will ensure that $E_2$ and $e_2'$ do not read any linear variables.  We also need to check that the non-linear partial uses every non-linear variable at least once.
By assumption, any non-linear variable $v$ from $\Gamma$ appears in either $e_1$ or $e_2$.
By induction, $v$ therefore appears in either the non-linear partial of $e_1$ or of $e_2$.
If $v$ appears in $E_1$, $e_1'$, $E_2$, or $e_2'$ directly, then $v$ appears in either $E$ or $e'$.
Otherwise, $v$ must either appear in the tape $\many{w}_{1}$ due to being free in $\dot e_1'$, or in $\many{w}_{2}$ due to being free in $\dot e_2'$.
Therefore, $v$ is free in $\dot e'$, because the latter only binds $\dbar v$, which cannot shadow $v$, being linear variables.
Hence we will add $v$ to the $\many w$, and it will again appear in the non-linear partial $E\parens{e, \many w}$ of $e$.
A similar argument covers the non-linear variables $\many v$ bound by the $\ttt{let}$.
Checking variable consumption for the residual is a simpler variation of the same argument.

Some notes:
The \rl{UDef} rule coordinates with the \rl{UApp} rule to transport the tape in a single variable of tuple type rather than increasing the size of every call site by the number of intermediate values of the function.
The \rl{UApp} rule relies on program ordering, because we read the type of the unzipped callee $\nonlin f$ in order to annotate the type of the call site.\footnote{%
This is trivial for \Lone\ as presented, because it does not admit recursion; but this point becomes more
subtle if recursion is permitted.
Unzipping in a recursive language must be able to construct recursive types to represent the tapes of recursive computations.
This can be accomplished by indirection: for each function $f$, allocate a fresh name $\pi_f$ for the type of the tuple carrying the tape.
Then unzipping can use that name fill in type annotations of call sites of $f$ before $f$ is unzipped, and unzipping $f$ provides the definition of $\pi_f$ (which may refer to $\pi_f$ if $f$ is recursive).
}
To enforce the invariant that non-linear expressions of \Lonestrict\ do not consume linear variables, we explicitly unzip $\drop$ expressions in rule \rl{UDrop}.
The proof of claim 4 requires induction on the program to assert that any functions called by $e$ compute the correct values, but is otherwise straightforward.
\end{proof}

\section{Transposition}
\label{sec:transposition}

\vspace{.4em}
\begin{displayquote}
To JVP, or VJP? That is the question:\\
Whether 'tis nobler in the mind to suffer\\
The slings and arrows of outrageous re-implementation,\\
Or to take up Arms against a Sea of troubles,\\
And by transposing, end them.
\end{displayquote}
\vspace{.4em}

We turn to our main focus: transposing linear functions.
Recall that our ultimate goal is to derive reverse-mode automatic differentiation from forward mode, without having to duplicate derivative rule implementations.
We do this through the notion of transposition.  Every linear function $f: \R^n \to \R^m$ has a unique transpose $f^\T: \R^m \to \R^n$ defined by consistency with respect to dot product:
\begin{equation}
    \dotp{x}{f^\T (y)} = \dotp{f(x)}{y}. \label{eq:dot-product}
\end{equation}
For a given nonlinear function $F: \R^n \to \R^m$ and input $x \in \R^n$, this is exactly the relationship between the forward derivative $dF : (y \in \R^n) \mapsto (JF|_x) y$ and the reverse derivative $dF^\T : (z \in \R^m) \mapsto z(JF|_x)$, where $JF|_x$ is the Jacobian of $F$ at $x$.

Concretely, to compute a gradient of $F:\R^n \to \R$ at $\melval{x} \in \R^n$, we differentiate $F$ with $\J$ to produce $F^\J$, unzip it into the linear and non-linear components $\nonlin {F^\J}, \lin {F^\J}$, transpose the latter to $\lin{F^\J}^\T$, and supply $1 \in \R$ as an initial gradient to the linear erasure thereof: 
\[ \nabla F|_{\many x} = \left( \letx {\many y; \many {dx}} {\nonlin{F^\J}(\many x;)} {\Le(\lin{F^\J}^\T)(\many {dx}, 1;)} \right). \]

An example transposition is given in \Cref{fig:transpose-sin}.  We use the double-dot notation $\ddot v$ as a mnemonic for cotangent variables (not second-order derivatives).
Each cotangent variable holds a partial derivative of the final result of $F$ with respect to something, which are propagated in the opposite order from the primal computation.
We assign no formal meaning to the dots themselves: $\ddot v$ is just a different variable than $\dot v$, even though we mean for it to carry the cotangent corresponding to $\dot v$.

\begin{figure}\codeexample
\begin{minipage}{0.45\textwidth}
\begin{align*}
\defxArrow
  {& \nonlin{\ttt{f}^\J}^{1,0}_{2,0}}
  {u: \R; }
  {\R, \otimes \R; } {}
  {\\
  & \letx {v: \R; } { \sin(u) } {\\
  & \letx {du: \R; } { \cos(u) } {\\
  & \letx {w: \R; } { -v } {\\
  & \letx {t: \otimes \R} {\otimes \parens{du}} {\\
  & \parens{w, t;}}}}}}\\
\defxArrow
  {& \lin{\ttt{f}^\J}^{1,1}_{0,1}}
  {t: \otimes \R; \dot u: \R}
  {; \R} {}
  {\\
  & \letx {\otimes du; } {t} {\\
  & \letx {; \dot v: \R} { du * \dot u } {\\
  & \letx {; \dot w: \R} { -\dot v } {\\
  & \parens{; \dot w}}}}}
\end{align*}
\end{minipage}
\begin{minipage}{0.053\textwidth}
\vspace{0.8in}
$\transformsto{\T}$
\end{minipage}
\begin{minipage}{0.45\textwidth}
\begin{align*}
\defxArrow
  {& \nonlin{\ttt{f}^\J}^{1,0}_{2,0}}
  {u: \R; }
  {\R, \otimes \R; } {}
  {\\
  & \letx {v: \R; } { \sin(u) } {\\
  & \letx {du: \R; } { \cos(u) } {\\
  & \letx {w: \R; } { -v } {\\
  & \letx {t: \otimes \R} {\otimes \parens{du}} {\\
  & \parens{w, t;}}}}}}\\
\defxArrow
  {& {\lin{\ttt{f}^\J}^\T}^{1,1}_{0,1}}
  {t: \otimes \R; \ddot w: \R}
  {; \R} {}
  {\\
  & \letx {\otimes du; } {t} {\\
  & \letx {; \ddot v: \R} { -\ddot w } {\\
  & \letx {; \ddot u: \R} { du * \ddot v } {\\
  & \parens{; \ddot u}}}}}
\end{align*}
\end{minipage}
\caption{Completing our running example by transposing the derivative.  On the left, we have the forward phase $\nonlin{\J(f)}$ and residual $\lin{\J(f)}$ of $f(u) = -\sin(u)$ from \Cref{fig:unzip-sin}.
On the right, we transpose just the linear residual $\lin{\J(f)}$ to obtain the reverse phase $\T(\lin{\J(f)})$, which computes the derivative of $f$ in reverse mode.}
\label{fig:transpose-sin}
\end{figure}

\subsection{Details}

\subsubsection{Input to Transposition}
We transpose a function that's well-typed in \Lonestrict\ with the $\T$ transformation\footnote{$\T$ is available in JAX as the \texttt{jax.linear\_transpose} function. Note that JAX does not distinguish between linear and non-linear expressions in its language, so the relevant typing judgements are reconstructed on the fly during transposition. The transform assumes that the given function can be typed according to the type system presented here and any failure to infer a valid type is reported to the user as an error.} given in \Cref{fig:transposition}.
Recall from \Cref{def:linear-b} that \Lonestrict\ is the sub-language of \Lone\ where all expressions must return either all-linear or all-non-linear results.
We restrict to \Lonestrict\ because general \Lone\ functions are free to interleave linear with non-linear computations, but we transpose only the strictly linear fragment.
Converting an arbitrary \Lone\ expression into \Lonestrict\ was the purpose of the unzipping transformation $\Pa$ of \Cref{sec:partial}.

\begin{figure}\footnotesize
\renewcommand{\arraystretch}{3}
\begin{tabular}{cc}
\multicolumn{2}{c}{
\framebox{$\opT{\seqTan, \seqT} {\dot e} \defunc \ddot e$}} \\
\multicolumn{2}{c}{
$\Infer{TRet}{~~}
  {\opT {\langle \many{\dot v: \dot \tau} \rangle, \langle \many{\ddot v: \typet{\dot \tau}} \rangle} {\parens {;\dbar v} } \defunc
    \parens{;\ddbar v}}$}\\
\multicolumn{2}{c}{
$\Infer{TLetLin}
  {     \seqTan_2 = \langle \many{\dot w : \dot \sigma} \rangle
  \quad \opT {\langle \many{\dot w:\dot \sigma}, \many{\dot v:\dot \tau} \rangle, \seqT} {\dot e_2} \defunc \ddot e_2
  \quad \opT {\seqTan_1, \langle \many{\ddot v:\dot \tau} \rangle} {\dot e_1} \defunc \ddot e_1
  \quad \ddbar v, \ddbar w\ \trm{fresh}}
  {\opT{\seqTan_1 \uplus \seqTan_2, \seqT} {\letx {;\many{\dot v:\dot \tau}} {\dot e_1} {\dot e_2}}
  \defunc \letx {;\many{\ddot w:\typet{\dot \sigma}}, \many{\ddot v:\typet{\dot \tau}}} {\ddot e_2} {\parens{;\ddot e_1, \ddbar w}}}$}\\
\multicolumn{2}{c}{
$\Infer{TLetNonLin}
  {     \opT {\seqTan, \seqT}{\dot e_2} \defunc \ddot e_2
  \quad \{\many v\}\ \trm{non-empty}}
  {\opT{\seqTan, \seqT} {\letx {\many {v:\tau};} {e_1} {\dot e_2}} \defunc
    \letx {\many{v:\tau};} {e_1} {\ddot e_2}}$}\\
\multicolumn{2}{c}{
$\Infer{TApp}{~~}
  {\opT {\langle \many{\dot v:\dot \tau} \rangle, \langle \many{\ddot v:\ddot \tau} \rangle} {\lin{f} \parens {\many v; \dbar v} } \defunc
    \lin{f}^\T\parens{\many v; \ddbar v} }$}\\
\multicolumn{2}{c}{
$\Infer{TNonLin}
  {e\ \trm{non-linear}}
  {\opT {\varepsilon, \varepsilon} e \defunc e}$}\\
$\Infer{TLinVar}{~~}
  {\opT {\langle \dot v:\dot \tau \rangle, \langle \ddot v:\typet{\dot \tau} \rangle} {\dot v} \defunc \ddot v}$ &
$\Infer{TLinZero}{~~}
  {\opT {\epsilon, \langle \ddot v:\typet{\dot \tau} \rangle} {0_{\dot \tau}} \defunc \drop(\ddot v)}$ \\
\multicolumn{2}{c}{
$\Infer{TLinTup}
  {     \ddbar w\ \trm{fresh}}
  {\opT {\langle \many{\dot v: \dot \tau} \rangle, \langle \ddot v:\otimes \many{\typet{\dot \tau}} \rangle} {\otimes \parens{\dbar v}} \defunc
  \letx {\otimes \ddbar w: \otimes \ddbar \tau} {\ddot v} {\parens{;\ddbar w}}}$}\\
\multicolumn{2}{c}{
$\Infer{TLinAdd}{~~}
  {\opT { \langle \dot v_1:\dot \tau, \dot v_2:\dot \tau \rangle , \langle \ddot v:\typet{\dot \tau} \rangle} {\dot v_1 + \dot v_2} \defunc
   \dup(\ddot v) }$} \\
\multicolumn{2}{c}{
$\Infer{TLinMul}{~~}
  {\opT {\langle \dot v_2:\dot \tau \rangle, \langle \ddot v_2:\typet{\dot \tau} \rangle} {v_1 * \dot v_2} \defunc
   v_1 * \ddot v_2 }$}\\
$\Infer{TDup}{~~}
  {\opT {\langle \dot v:\dot \tau \rangle, \langle \ddot v_1:\typet{\dot \tau}, \ddot v_2:\typet{\dot \tau} \rangle} {\dup(\dot v)} \defunc
   \ddot v_1 + \ddot v_2 }$&
$\Infer{TDropLin}
  {\{\dbar v\}\ \trm{are the free variables of $\dot e$}}
  {\opT {\langle \many{\dot v:\dot \tau} \rangle, \epsilon} {\drop(\dot e)} \defunc \parens{; 0_{\typet{\dot \tau_i}}}}$\\
\multicolumn{2}{c}{
\framebox{$\T(\lin{f}) \defunc \lin{f}^\T$}} \\
\multicolumn{2}{c}{
$\Infer{TDef}
  {     \opT {\langle \many{\dot v: \dot \tau} \rangle, \langle \many{\ddot v: \typet{\dot \sigma}} \rangle}{e} \defunc e'
  \quad \ddbar v\ \trm{fresh}}
  {\T (\defx {\lin f} {\many{v:\tau}; \many{\dot v: \dot \tau}} {;\dbar \sigma} e ) \defunc
    \defx {\lin f^\T} {\many{v:\tau}; \many{\ddot v: \typet{\dot \sigma}}} {;\typet{\dbar \tau}} {e'}}$}
\\
\end{tabular}
\caption{\Lonestrict\ transposition.
\Lonestrict\ is the subset of \Lone\ where all functions and expressions return only linear or only non-linear results, and non-linear expressions do not read linear variables.
Transposition leaves non-linear expressions as they are, but order-reverses linear expressions: a linear expression $\dot e$ with $n$ free linear variables and $m$ linear results becomes a linear expression $\ddot e$ with $m$ free linear variables and $n$ results computing the (linear-algebraic) transpose of $\dot e$.
}
\label{fig:transposition}
\end{figure}

\subsubsection{Transposition Environments}
To read the transposition rules, it helps to view an expression as a function from its environment to its results, and to think of its environment as ordered.  So, if $\dot e$ consumes the linear variables of an ordered type environment $\seqTan$, the transpose $\ddot e$ must produce one result corresponding to each variable in $\seqTan$.
Likewise, for every result of $\dot e$, $\ddot e$ must consume a corresponding free variable.  The transform $\T$ carries names for these (cotangent) variables in a second ordered type environment $\seqT$.

We use the notation $\seqTan$, $\seqT$ and angle brackets to emphasize that they are ordered vectors instead of the unordered mappings $\envTan$, $\envT$, and match the order of the relevant expression results.
When we compose and decompose expressions and their corresponding ordered environments, it may be necessary to insert shuffling operations to maintain this ordering invariant.
These shuffles are uninteresting and expressible using only let and multi-value return, both of which cost 0 in our model, so we leave them implicit.
The ordering is irrelevant for type-checking, so we freely use $\seqTan$ and $\seqT$ as typing environments as well.

\subsubsection{Omitted Rules}
\Cref{fig:transposition} omits the rules \rl{TUnpack} and \rl{TLinUnpack} for non-linear and linear tuple unpacking, because they are identical to the corresponding multiple-value let rules, mutatis mutandis.
There are no specific rules for transposing an expression that produces non-linear results because it transposes to itself (rule \rl{TNonLin}).
\Lonestrict\ requires non-linear expressions to have no linear sub-expressions (not even through $\drop$), so transposition need not traverse them.

\subsubsection{Benefit from Explicit $\drop$ and $\dup$}
Now we have the payoff from requiring all linear variables to be explicitly consumed: without that, $\T(\dot e)$ would have to return zeros for variables in $\seqTan$ that were not free variables of $\dot e$, and would later have to do work to add those zeros to the real cotangents.
The payoff to requiring linear variables to be consumed exactly once is that rules do not need to introduce additions to merge the results of transposed sub-expressions---those cotangents are already guaranteed to correspond to different variables, and the summations are produced by the \rl{TDup} rule.

\subsection{Formalization}

The goal of transposition is to transform a linear function $f$ into its transpose $f^\T$, as defined by pulling back the dot product (\ref{eq:dot-product}).
We now prove that it does so, and that $f^\T$ isn't more expensive than $f$.

\begin{theorem}
\textbf{Transposition is total, correct, and amortized work-preserving.}
Consider a linear \Lonestrict\ expression $\dot e$, well typed in non-linear and linear environments $\Gamma$, $\dot \Gamma$.
Fix an ordering of variables in $\dot \Gamma$ to form $\seqTan$.
Let $\melval{x}$ and $\mdelval{x}$ be well-typed values corresponding to the free non-linear and linear variables in $\dot e$, respectively.
Let $\mddelval{x}$ be values matching the types of the (linear) results from $\dot e$, and let $\inargs$ and $\outargs$ be the number of scalars present in the linear inputs $\mdelval{x}$ and (linear) results $\mddelval{x}$, respectively.
Let $\W[\dot e]$ denote the amount of work done to execute $\dot e$.
Let $\seqT$ be an ordered environment giving variable names and types for the cotangents $\mddelval{x}$ of the (linear) results of $\dot e$.  Then
\begin{enumerate}
    \item $\ddot e = \T(\seqTan, \seqT | \dot e)$ exists and is well-typed in \Lonestrict, in environments $\Gamma$; $\seqT$, and emits linear results corresponding to $\seqTan$;
    \item $\W[\ddot e] + \inargs \leq \W[\dot e] + \outargs$; and
    \item $\dotp{\mdelval{x}}{\ddot e[\melval{x}; \mddelval{x}]} = \dotp{\dot e[\melval{x}; \mdelval{x}]}{\mddelval{x}}$.
\end{enumerate}
\label{thm:transpose}
\end{theorem}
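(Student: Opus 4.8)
The plan is to proceed by structural induction on the linear expression $\dot e$, with an outer induction on program order so that the function-application case may appeal to the inductive hypothesis on the callee's body (exactly as in the proof of \Cref{thm:linearity}, and for the same reason). Because the rules of \Cref{fig:transposition} are syntax-directed and exhaustive over \Lonestrict---every expression being either wholly non-linear, handled by \rl{TNonLin}, or wholly linear, handled by the remaining rules---totality and the typing half of Claim 1 fall out of the same induction: for each rule I would check that, given well-typed premises, the emitted $\ddot e$ type-checks in $\Gamma; \seqT$ and returns linear results matching $\seqTan$. This is mechanical bookkeeping on the environments. The non-linear environment $\Gamma$ passes through untouched since non-linear subexpressions transpose to themselves, and each rule manifestly swaps the roles of the consumed linear variables $\seqTan$ and the result cotangents $\seqT$, introducing the advertised fresh names.

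For Claim 2 (work) I would dispatch the leaf rules by direct computation against the cost model. The telling cases are the ones that trade cost against structure: \rl{TLinAdd} turns a $+$ (cost $s$ per $s$ result scalars) into a free $\dup$, \rl{TDup} does the reverse, \rl{TLinZero} turns a free $0$ into a $\drop$ of cost $s$, and \rl{TDropLin} short-circuits, replacing $\drop(\dot e)$ by zeros and discarding the work of $\dot e$ entirely. In each leaf case the inequality holds with the $\inargs$/$\outargs$ terms making up the slack; the only non-identity is \rl{TDropLin}, where I would invoke \Cref{lem:dead-code} to conclude that the discarded $\W[\dot e]$ plus the dropped scalars is at least the scalar count of the consumed linear variables, which is exactly what the inequality demands. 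For the compound rules the inequality telescopes. Writing $a,b,c,d$ for the scalar counts of $\seqTan_1$, the bound variables $\dot v_i$, $\seqTan_2$, and the results $\seqT$ in \rl{TLetLin}, the two inductive hypotheses read $\W[\ddot e_1] + a \le \W[\dot e_1] + b$ and $\W[\ddot e_2] + (b+c) \le \W[\dot e_2] + d$; adding them cancels $b$, and since $\W[\dot e] = \W[\dot e_1] + \W[\dot e_2]$ and $\W[\ddot e] = \W[\ddot e_2] + \W[\ddot e_1]$ this yields $\W[\ddot e] + (a+c) \le \W[\dot e] + d$, i.e.\ $\W[\ddot e] + \inargs \le \W[\dot e] + \outargs$.

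For Claim 3 (correctness) I would verify the transpose identity $\dotp{\delval{x}_j}{\ddot e[\cdot]} = \dotp{\ddelval{x}_l}{\dot e[\cdot]}$ directly at the leaves---for \rl{TLinAdd}, say, both sides expand to $\dotp{\ddelval{x}}{\delval{x}_1} + \dotp{\ddelval{x}}{\delval{x}_2}$---and propagate it through the compound rules using bilinearity and the symmetry of the real inner product. The crux is \rl{TLetLin}, the program-level analogue of $(BA)^\T = A^\T B^\T$. In the forward run, $\dot e_1$ maps the $\seqTan_1$-values to intermediate values $\delval{v}_i$ of the bound variables $\dot v_i$, and $\dot e_2$ maps the $\seqTan_2$-values together with $\delval{v}_i$ to the results. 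In the backward run, $\ddot e_2$ produces cotangents $\ddelval{w}_j$ for $\seqTan_2$ together with intermediate cotangents $\ddelval{v}_i$ for $\dot v_i$, and $\ddot e_1$ then consumes $\ddelval{v}_i$ to produce the cotangents for $\seqTan_1$. The inductive identity for $\dot e_1$ rewrites the $\seqTan_1$-contribution to the left-hand inner product as $\dotp{\ddelval{v}_i}{\delval{v}_i}$, while the identity for $\dot e_2$ states that the $\seqTan_2$-contribution plus $\dotp{\delval{v}_i}{\ddelval{v}_i}$ equals the full right-hand side $\dotp{\ddelval{x}_l}{\dot e[\cdot]}$; the two $\dot v_i$-terms cancel by symmetry, leaving precisely the claimed identity. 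The remaining compound rules (\rl{TLetNonLin}, \rl{TApp}, and the two omitted unpacking rules) reduce to this same pattern or follow immediately from a single inductive hypothesis.

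The step I expect to be the main obstacle is the \rl{TLetLin} case of Claim 3: the careful accounting of the internal cotangents $\ddelval{v}_i$ for the let-bound variables---showing they enter with opposite sign and cancel---is where all the substance of ``transposition reverses a composition'' lives, and it is also where I must be most careful that the fresh-name plumbing lines up each cotangent with the correct variable. Everything else is either a direct leaf computation or a telescoping bookkeeping argument of the kind sketched above.
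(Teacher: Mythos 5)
Your proposal is correct and follows essentially the same route as the paper's proof: structural induction on \Lonestrict\ expressions with an outer induction on the program for \rl{TApp}, the telescoping cancellation of the $\inargs/\outargs$ corrections in \rl{TLetLin} (with \rl{TDup}/\rl{TLinAdd}/\rl{TLinZero} as the leaf cases where cost trades against structure, and \Cref{lem:dead-code} handling the short-circuit in \rl{TDropLin}), and the same inner-product calculation for \rl{TLetLin} in which the intermediate cotangents $\ddelval{v}_i$ cancel via the inductive hypotheses on $\dot e_1$ and $\dot e_2$. You have correctly identified \rl{TLetLin} as the crux throughout, which is exactly the case the paper works out in detail for all three claims.
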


Claim 2 is the amortized work preservation criterion for transposition.
The amortization corrections $\inargs$ and $\outargs$ are the number of scalar linear inputs consumed by the original and transposed expressions, respectively.
To see intuitively why a correction is needed, consider transposing a function that takes a single argument and fans it out (with $\dup$) $n$ times.
In our cost model, this function does no work, because $\dup$ is free; but its transpose must perform $n$ summations, and thus pay $n$.
We restore work preservation by crediting an extra $(n+1) - 1$ for the $n$ excess inputs the transposition consumes.

Another intuition for the amortization argument is that producing $\outargs$ results obliges the caller to eventually pay $\outargs$ to $\drop$ or use them.  A transposition that reduces that burden on the caller should get credit it can use to amortize the additional summations it has to do internally, and vice versa.  These corrections cancel when \Lone\ expressions are composed, so transposition is exactly work-preserving internally, and the discrepancy only shows up for the end-to-end \Lone\ computation.

Claim 3 is the correctness criterion for transposition: $\ddot e$ pulls dot product in the output space of $\dot e$ back(ward) along $\dot e$ to its input space, which is exactly the mathematical definition of transposition of linear operators.\footnote{In the language of algebraic geometry, if $\dot e$ implements a pushforward of $e$, then $\ddot e$ implements a pullback of $e$.}
Note that on the left-hand side of Claim 3, the $\mdelval{x}$ are treated as extra-linguistic values and $\mddelval{x}$ are fed in as linear inputs to $\ddot e$, whereas on the right-hand side, the $\mddelval{x}$ are extra-linguistic and the $\mdelval{x}$ are fed in as linear inputs to $\dot e$.

While we state our results on transposition in terms of vector spaces over the familiar field $\R$ of real numbers, one can straightforwardly extend them to a variant of \Lone\ modeling modules over any ring $\ring$.
If $\ring$ is not commutative, one needs to introduce a second linear scaling primitive $\dot e \cdot e$, corresponding to scalar right-multiplication in $\ring$-modules.
Scalar left- and right-multiplication then transpose to each other.
Our dot-product notation also becomes non-commutative, and Claim 3 becomes two claims: $\dotp{\mdelval{x}}{\ddot e[\melval{x}; \mddelval{x}]} = \dotp{\dot e[\melval{x}; \mdelval{x}]}{\mddelval{x}}$ and $\dotp{\ddot e[\melval{x}; \mddelval{x}]}{\mdelval{x}} = \dotp{\mddelval{x}}{\dot e[\melval{x}; \mdelval{x}]}$.
Augmenting $\ring$ with sufficient structure to define differentiation, and adjusting forward differentiation to correctly compute derivatives, is left as an exercise for the enterprising reader.

\begin{proof}[Proof of Claim 1]
Well-typing of transposition follows by structural induction on the syntax of \Lonestrict.
The most interesting rule is $\rl{TLetLin}$.  The expression $\dot e$ is well-typed in $\Gamma; \seqTan$ by assumption, so we can split the incoming environment $\seqTan$ into a distinct union of the variables $\seqTan_1$ free in $\dot e_1$ and $\seqTan_2$ free in $\dot e_2$ (except for the $\dbar v$ bound in this $\ttt{let}$ form).  The results of the overall expression $\dot e$ are the same as those of the body $\dot e_2$, so we can transpose $\dot e_2$ with respect to the augmented linear environment $\seqTan_2 \uplus \langle \many{v:\tau} \rangle$ (up to a permutation) and the same expected results $\seqT$.  This gives us the transposed body $\ddot e_2$, which by induction is well-typed in $\Gamma$; $\seqT$, and produces results corresponding to $\seqTan_2 \uplus \langle \many{v:\tau} \rangle$ (up to a permutation).  We now need to split them into the $\ddbar v$ corresponding to the $\dbar v$, which will be read by $\ddot e_1$, and the $\ddbar w$ corresponding to $\seqTan_2$, which we will return from the result $\ddot e$.

We now have the arguments needed to transpose the bound expression $\dot e_1$: its linear environment $\seqTan_1$ and variables $\ddbar v$ corresponding to its results.  They line up correctly because the original expression $\dot e$ was well-typed as a $\ttt{let}$.  Transposing $\dot e_1$ gives us $\ddot e_1$, which reads the $\ddbar v$ from its environment and produces results corresponding to $\seqTan_1$.  Our result expression $\ddot e$ binds the $\ddbar v$ from $\ddot e_2$ so $\ddot e_1$ can read them, and splices the results corresponding to $\seqTan_1$ and $\seqTan_2$ into results corresponding to $\seqTan$.
The result $\ddot e$ uses all the variables from $\seqT$ exactly once because the body $\ddot e_2$ does.

The other rules proceed similarly.  One subtlety worth noting is that in \rl{TLetNonLin}, we assume that all free linear variables of $\dot e$ are in fact free in the body $\dot e_2$, and none occur in the bound expression $e_1$.
This rule is why \Lonestrict\ expressly forbids non-linear expressions from consuming linear variables, which they might otherwise do with $\drop$.
\end{proof}

\begin{proof}[Proof of Claim 2]
Work preservation needs induction on the program as well as the syntax of expressions.
The proof amounts to checking that the work corrections do, indeed, cancel.  We again consider \rl{TLetLin} first.
If we denote by $|\cdot|$ the number of linear scalars present in some set of variables, we compute
\begin{align*}
    \W[\dot e]
    & = \W[\dot e_1] + \W[\dot e_2] & \trm{cost of \ttt{let}} \\
    & \geq \W[\dot e_1] + \W[\ddot e_2] + (|\seqTan_2| + |\dbar v|) - |\seqT| & \trm{induction on $\dot e_2$} \\
    & \geq \W[\ddot e_1] + |\seqTan_1| - |\ddbar v| + \W[\ddot e_2] + (|\seqTan_2| + |\dbar v|) - |\seqT| & \trm{induction on $\dot e_1$} \\
    & = \W[\ddot e] + |\seqTan| - |\seqT|, & \trm{$\ddot v_i$ correspond to $\dot v_i$}
\end{align*}
as desired.

Handling \rl{TApp} is where we need to use induction on the program being transposed.
To wit, in this case, $\W[\dot e]$ equals the work $w$ done to evaluate (the body of) the function $\lin f$ being called, and $\W[\ddot e]$ equals the work $w^\T$ of $\lin{f}^\T$, and the corrections correspond as well.
An induction only on expression syntax would not give us the needed premise that $w \geq w^\T + \trm{correction}$, but induction on the program does, because \Lonestrict\ has no recursion.

We also find it instructive to attend to work preservation of $\rl{TDup}$.  The term $\dup(\dot v)$ itself does no work in our cost model, but the addition it transposes to does work equal to $|\dot v|$, the number of real scalars in the variable.
This is what we needed those corrections for: $\dup(\dot v)$ produces $2|\dot v|$ results, while the addition only produces $|\dot v|$ results, so the corrected work is conserved.
Similar reasoning applies to $\rl{TAdd}$ and $\rl{TLit}$.

The rule $\rl{TDrop}$ is where the inequality in work preservation may be strict: we short-circuit the subexpression $e$, just emitting zero cotangents for its free linear variables.  The resulting work $\W[0_{\T(\dot \tau_i)}]$ is zero, corrected to $|\dbar v|$.
To argue work preservation, we must invoke \Cref{lem:dead-code} to conclude that $\W[\drop(\dot e)] \geq |\dbar v|$, and thus lower-bound the corrected work of the pre-transposition expression at $|\dbar v|$ as well.
\end{proof}

\begin{proof}[Proof of Claim 3]
Correctness is also an induction on the program and the syntax of expressions.
To give a flavor for the leaf-node behavior of transposition, let's prove correctness of \rl{TLinAdd} and \rl{TDup}.
For type-compatible $\dot v_1, \dot v_2$, and $\ddot v$,
$$
  \dotp{(\dot v_1, \dot v_2)}{\dup(\ddot v)} 
  = \dotp{\dot v_1}{\ddot v} + \dotp{\dot v_2}{\ddot v} 
  = \dotp{\dot v_1 + \dot v_2}{\ddot v}.
$$
Read from left to right, this is exactly the correctness equation for \rl{TDup}, and from right to left for \rl{TLinAdd}.
Note that this simple proof critically depends on the distributivity of multiplication over addition, also known as the linear factoring rule, which is crucial for the efficiency of reverse-mode autodiff (see e.g. \citet{brunel2020backpropagation}).

Transposition on interior AST nodes is exemplified by \rl{TLetLin}.
To begin the calculation, we stipulate the disjoint partition of $\mdelval{x}$ into $\mdelval{x}_{1}$ corresponding to $\seqTan_1$ and $\mdelval{x}_{2}$ corresponding to $\seqTan_2$; and a similar (but not necessarily disjoint) split of $\melval{x}$ into $\melval{x}_{1}$ and $\melval{x}_{2}$.
To simplify notation, we name the intermediate quantities $\mdelval{z}$, $\mddelval{z}$, and $\mddelval{x}_{2}$, as, respectively, the forward evaluation of the bound expression $\dot e_1$, the corresponding backward evaluation of the transposed body $\ddot e_2$, and the residual results of $\ddot e_2$ which correspond to the free linear values $\mdelval{x}_{2}$:
\begin{align*}
    \mdelval{z} & = \dot e_1[\melval{x}_{1}; \mdelval{x}_{1}] \\
    \mddelval{z} \uplus \mddelval{x}_{2} & = \ddot e_2[\melval{x}_{2}; \mddelval{x}].
\end{align*}

Then we conclude the correctness proof for \rl{TLetLin} by calculation:
\begin{align*}
    \dotp{\dot e[\melval{x}; \mdelval{x}]}{\mddelval{x}}
    & = \dotp{\dot e_2[\melval{x}_{2}; \mdelval{z} \uplus \mdelval{x}_{2}]}{\mddelval{x}}
      & \trm{1-step evaluation of $\dot e$} \\
    & = \dotp{\mdelval{z} \uplus \mdelval{x}_{2}}{\ddot e_2[\melval{x}_{2}; \mddelval{x}]}
      & \trm{induction on $\dot e_2$} \\
    & = \dotp{\mdelval{z}}{\mddelval{z}} + \dotp{\mdelval{x}_{2}}{\mddelval{x}_{2}}
      & \trm{partitioning dot product} \\
    & = \dotp{\mdelval{x}_{1}}{\ddot e_1[\melval{x}_{1}; \mddelval{z}]} + \dotp{\mdelval{x}_{2}}{\mddelval{x}_{2}}
      & \trm{induction on $\dot e_1$} \\
    & = \dotp{\mdelval{x}}{\ddot e_1[\melval{x}_{1}; \mddelval{z}] \uplus \mddelval{x}_{2}}
      & \trm{unpartitioning dot product} \\
    & = \dotp{\mdelval{x}}{\ddot e[\melval{x}; \mddelval{x}]}.
      & \trm{1-step (un)evaluation of $\ddot e$}
\end{align*}

Other syntactic forms proceed similarly, and more simply.  \rl{TApp} again requires induction on the whole program being transposed.
\end{proof}

\begin{corollary}
If we transpose an expression $\dot e$ twice, the result is equivalent to $\dot e$ and does at most the same amount of work. 
\end{corollary}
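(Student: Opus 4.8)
The plan is to prove both halves of the corollary by invoking \Cref{thm:transpose} twice: once on $\dot e$ to obtain its transpose $\ddot e$, and once on $\ddot e$ to obtain the double transpose $\T(\ddot e)$. The key enabling fact is Claim 1, which guarantees that $\ddot e$ is itself a well-typed \Lonestrict\ expression; this lets the theorem apply to $\ddot e$ verbatim (after fixing any ordering of its linear environment to form a fresh $\seqTan$), but with the roles of inputs and outputs exchanged. Concretely, the linear inputs of $\ddot e$ are the cotangents $\ddelval{x}_l$, matching the $\outargs$ scalars of $\dot e$'s results, while the linear outputs of $\ddot e$ correspond to $\seqTan$, the $\inargs$ scalars of $\dot e$'s inputs. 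Thus $\T(\ddot e)$ maps the same input space to the same output space as $\dot e$.

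For equivalence I would chain Claim 3 twice. Writing $n = \inargs$ and $m = \outargs$ for the scalar counts of $\dot e$'s linear inputs and outputs, Claim 3 on $\dot e$ reads $\dotp{\delval{x}_j}{\ddot e[\elval{x}_i; \ddelval{x}_l]} = \dotp{\ddelval{x}_l}{\dot e[\elval{x}_i; \delval{x}_j]}$, while Claim 3 on $\ddot e$ (whose transpose is $\T(\ddot e)$, fed $\delval{x}_j$ as its linear inputs) reads $\dotp{\ddelval{x}_l}{\T(\ddot e)[\elval{x}_i; \delval{x}_j]} = \dotp{\delval{x}_j}{\ddot e[\elval{x}_i; \ddelval{x}_l]}$. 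The two share the middle term $\dotp{\delval{x}_j}{\ddot e[\elval{x}_i; \ddelval{x}_l]}$, so composing them gives $\dotp{\ddelval{x}_l}{\T(\ddot e)[\elval{x}_i; \delval{x}_j]} = \dotp{\ddelval{x}_l}{\dot e[\elval{x}_i; \delval{x}_j]}$ for every choice of dual values $\ddelval{x}_l$. Since the standard dot product on $\R^m$ is non-degenerate, the two outputs must agree for all $\elval{x}_i$ and $\delval{x}_j$, so $\T(\ddot e)$ and $\dot e$ denote the same conditionally linear function.

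For the work bound I would apply Claim 2 twice and watch the corrections cancel. Claim 2 on $\dot e$ gives $\W[\ddot e] + n \leq \W[\dot e] + m$, while Claim 2 on $\ddot e$---with its input and output scalar counts swapped to $m$ and $n$---gives $\W[\T(\ddot e)] + m \leq \W[\ddot e] + n$. Chaining these yields $\W[\T(\ddot e)] + m \leq \W[\ddot e] + n \leq \W[\dot e] + m$, and cancelling $m$ leaves $\W[\T(\ddot e)] \leq \W[\dot e]$. The inequality can be strict precisely because \rl{TDropLin} short-circuits dead linear code, so double transposition can only ever do less work, never more---exactly the ``at most the same amount'' the corollary asserts.

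I do not expect a genuine obstacle here, since the argument is pure bookkeeping layered on top of \Cref{thm:transpose}. The one place demanding care is tracking which values play the role of linear inputs versus extra-linguistic dual values across the two applications of Claim 3, as these swap between the two invocations; and confirming the non-degeneracy step is legitimate, i.e.\ that equality of $\dotp{\ddelval{x}_l}{\cdot}$ for all $\ddelval{x}_l$ forces equality of the outputs. This holds because we work over $\R$ (and carries over to the ring-module variant sketched after the theorem whenever the relevant pairing stays non-degenerate).
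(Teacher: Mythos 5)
Your argument is correct and is precisely the intended one: the paper leaves this corollary unproved, and the evident derivation is exactly what you give---apply \Cref{thm:transpose} twice (using Claim 1 to justify the second application), chain Claim 3 and invoke non-degeneracy of the dot product over $\R$ for equivalence, and chain Claim 2 so the $\inargs$/$\outargs$ corrections cancel for the work bound. Your bookkeeping of which values are linear inputs versus extra-linguistic cotangents in each application is also handled correctly, so there is nothing to add.
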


\section{Code size}

When implementing practical compilers, the size of the (intermediate) program representation is also an important consideration alongside its ultimate runtime, because it limits the runtime and memory cost of the compiler itself.
We thus emphasize that our transformations increase code size only linearly, because they turn function definitions and function calls into new definitions and calls, without duplicating or inlining function bodies.

We can argue this formally by modeling the size $|P|$ of a \Lone\ program $P$ simply as ``every node in the grammar from \Cref{fig:linear-a} has size 1''.  We then deduce
\begin{theorem}\textbf{Erasure, differentiation, unzipping, and transposition are size-preserving.}
There exist constants $c_\Le$, $c_\T$, $c_\Pa$, $c_\J$ such that for any \Lone\ program $P$,
\begin{align*}
    |\Le(P)| & \leq c_\Le |P|, \\
    |\J(P)| & \leq c_\J |P|, \\
    |\Pa(P)| & \leq c_\Pa |P|,\ \trm{and} \\
    |\T(P)| & \leq c_\T |P|, \\
\end{align*}
where $\Le$, $\J$, $\Pa$, and $\T$ are the erasure, differentiation, unzipping, and transposition transformations defined in Sections~\ref{sec:linearity-erasure},~\ref{sec:jvp},~\ref{sec:partial}, and~\ref{sec:transposition}, respectively.
\label{thm:size-preservation}
\end{theorem}

\begin{proof}
Mostly by inspection, but there is one subtlety.  In rule \rl{UApp} (\Cref{sec:partial}), the type $\sigma$ of the tape of the applied function $f$ appears in the emitted let binding.
In \Lone\ as presented, $\sigma$ will have size proportional to the number of non-linear intermediates in $f$, which may be asymptotically larger than the call site being transformed.
This is, however, easy to fix: just augment the type system with synonyms, and annotate that binding with a synonym for the tape instead.
When function calls are nested, this amounts to retaining structure sharing in the tape types.

The same phenomenon is also why linearity erasure $\Le$ (\Cref{sec:linearity-erasure}) has to introduce dedicated functions for vector space operations on arbitrary types, instead of inlining them.
\end{proof}

The constant $c_\J$ depends on the set of primitives and their derivatives defined for \Lone, but $c_\Le$, $c_\Pa$ and $c_\T$ are absolute, given by the size of the expressions constructed by our rules for $\Le$, $\Pa$ and $\T$, respectively.
It is necessary to count variable references in program size (even though we model them with zero runtime cost), because for instance $\J$ introduces calls to $\dup$ for repeated references to (non-linear) variables.

\section{Related work}

Most of the material presented here can be found in one form or another in existing work, but the composition of individual pieces is novel.
The AD architecture we describe was sketched out briefly in
\citet{frostig2021decomposing} and again in
\citet{paszke2021dex}, and is in use in the
JAX and Dex projects~\citep{frostig2018compiling, jax2018github}.
Transposition as a program transformation has been previously explored by \citet{piponi2009transposition}.
Unzipping is an instance of partial evaluation---a technique widely known and explored \citep{futamura1983partialeval,jones1993partialeval}---with the difference that we are interested not only in the residual program, but also in the program formed by operations that are considered ``statically computable''.
Automatic differentiation itself has seen rapid growth in recent years, especially for functional languages, which we attempt to briefly summarize below.

Forward-mode differentiation is significantly simpler to embed in existing programming languages, for example by the means of  templates \citep{Piponi2004ad}, operator overloading \citep{walther2003ooad} and Haskell-style typeclasses or ML-style functors \citep{karczmarczuk1998ad,elliott2009beautiful-ad}. The true difficulty, both conceptually and implementation-wise, arises for reverse mode.

Initially, reverse-mode autodiff development was focused on imperative languages. In particular, the implicit reuse of values in the original computation was consistently translated to destructive mutation.
This made higher-order differentiation more difficult, and meant that it generally did not have a natural non-monadic translation into pure functional languages.
This challenge was overcome by \citet{pearlmutter2008lambda-tub}, with the introduction of reverse mode as an augmentation of the original program with ``backpropagator'' closures replacing the tape.
As shown by \citet{wang2019shiftreset} a similar transformation can be performed by expressing the non-local control using delimited continuations, manifested by the shift and reset operators.
Furthermore, \citet{brunel2020backpropagation} use a similar method to perform provably correct reverse-mode differentiation over lambda calculus extended with linear negation.
Especially interestingly, they already make the connection between linear logic and algebraic linearity that underlies the work presented here.
Finally, \citet{krawiec2021provably} outline a provably correct and efficient implementation of AD in a higher-order language, at a cost of requiring an extra-linguistic monadic translation.

The purpose of the monadic translation of \citet{krawiec2021provably} is to stage out a program, expressing the linearization of primal evaluation at a given point, as the primal is running.
But, since control flow is dependent only on the non-linear values, the staged program never uses control flow or higher-order functions itself (linearization of only the branches taken gets inlined). In fact, it uses a language almost identical to the linear part of \Lone.

That approach can be factorized through the lens taken in our work.
\citet{krawiec2021provably} describes reverse-mode AD in two phases.
The first phase simultaneously: traces the metalanguage into a representation without control flow and higher-order functions (our \Lone); performs forward mode AD (our $\J$); and partially evaluates (like our unzipping $\Pa$, but specializing the linear function to exact values, instead of separating non-linear compute).
In the second phase, transposition is implemented as an evaluator function that traverses the staged linear program.

Complementing the previous implementation-focused work, multiple explorations of autodiff semantics have been developed.
Many of those give up the asymptotic efficiency guarantees, which are critical for practical implementations of autodiff, but they do help shed light onto the underlying theory.
\citet{elliott2018simpleessence} outlines a minimal autodiff system for a language following category theoretical foundations.
Similarly to our work he also extensively reasons about the linearity of derivatives, including the introduction of a linear function type.
\citet{abadi2020simple} describe consistent operational and denotational semantics for autodiff in a first-order language, while \citet{huot2020correctness} extend it to a higher-order language using diffeology semantics.
\citet{mazza2021pcfad} outline semantics for a higher-order language with conditionals that are consistent with the mathematical notion of differentiation everywhere except for a measure 0 set.

\section{Discussion and future work}
\label{sec:future-work}

\textbf{Extensibility.}
Decomposing reverse-mode AD can lead to fewer AD rules overall. Consider a language with $N$ primitives, $H$ of which are higher-order (think control flow) and $L$ are linear.
If the two AD modes are separate, then a system implementer has to provide $N + N$ rules---one for each primitive for each mode.
However, in our case, $N + H + L$ suffice: one still implements forward-mode for all primitives, but unzipping acts over all first-order primitives uniformly, and transposition only deals with linear primitives.
Many first-order array languages support tens or even hundreds of primitives, most of which are both first-order and non-linear, leading to significant savings.

As a corollary, with our approach, an AD system user who wants to customize AD behavior on a subprogram only needs to provide the custom forward-mode implementation, as the customized reverse-mode implementation can be derived for them automatically.

\textbf{Control flow.}
Data-dependent control is the most glaring omission from \Lone.
While \Lone\ remains useful as a model of the straightline computation results from tracing at a given primal point, adding ``if'' or ``case'' (and loops or recursive functions) would broaden it to a model of the whole surrounding numerical language.

Control would be a substantial extension: control flow means that different branches may compute different intermediate values, so one must also extend the type system to allow sum types (which were not necessary in \Lone).
Having sum types, one must then concern oneself with defining the tangent type of a sum.
It would be most satisfying to use a dependent type system, so as to preserve the invariant that only non-linear variables may have sum types, and linear computations operate only on products of $\R$.
\emph{Which} product, however, can depend on a previous non-linear computation.
Such a type system would also capture the constraint that the value tangent to a non-linear sum must inhabit the same variant.

\textbf{Multilinearity.}
\Lone\ does not model multilinearity.  To wit, multiplication is actually linear separately in both arguments, whereas its model in our system treats the first argument non-linearly.  More generally, a function like $f(x, y, z) \mapsto z (x + y)$ can be viewed as linear in $z$, or as linear in $(x, y)$ jointly.  Our linear languages force the user to choose one, or to duplicate the function if both interpretations are necessary; whereas it would be more satisfying to be able to encode all of those linearity properties in a type system at once.

Multilinearity seems to be an important ingredient in two other directions of generalization.  First, a program containing independent derivatives will have components that are independently linear; which can be important to track, especially if those derivatives interact (e.g., if they are nested).  This is important to address in order to give a satisfying, linearity-aware typing to in-language automatic differentiation (as opposed to the extra-linguistic transformations we dealt with here).

Second, multilinearity is part of extending these ideas to higher-order functions.  When asking what it means for a function-returning function such as $f : \R \to (\R \to \R)$ to be linear, a natural answer seems to be that the final result should be linear in both inputs separately, i.e., multilinear.  However, any system covering higher-order functions should be able to give a linear type to curried addition as well, where the final result is linear in both arguments jointly.

\textbf{Partial evaluation.}
We have observed that our unzipping transformation $\Pa$ is an instance of partial evaluation.  Unzipping has a time-space tradeoff: for each non-linear intermediate, $\Pa$ can either store the result on the tape (to minimize work), or emit code in the linear residual that recomputes it (to reduce memory pressure by the emitted program).  The AD community has devoted considerable attention to this tradeoff under the heading of checkpointing AD; perhaps there are ideas there that could be ported to general partial evaluation in memory-limited settings.

\textbf{Array size polymorphism.}
One of the limitations of \Lone\ as defined is that the size of any tuple is syntactically apparent, so it is not possible to write size-polymorphic (i.e., ``array processing'') functions in \Lone.
Array-size-polymorphic functions are of course critical in practice, but are mostly orthogonal to the transformations we introduced here.
We leave a full formalization to future work, with the comment that JAX resolves array size polymorphism at trace time (in our terms, during metaprogramming of \Lone), whereas Dex handles it by extending the internal representation with constructs that deal with it directly.  %

\textbf{Sparse matrix algorithms.}
In a different direction, indexed linear functions are a representation for matrices.  
Indeed, they are a maximally expressive representation for matrices, that can encode any structured (or unstructured) sparsity pattern whatever.
This statement is not very deep: the sine qua non of a sparse matrix representation is to be able to compute matrix-vector products $M x$; the function that closes over the matrix $M$ and computes that product with a vector $x$ it accepts as argument is algebraically linear; and it seems obvious that any such function can be written to type-check as linear in $x$ in some reasonable variant of \Lone\ (perhaps extended with control flow as needed).

From this lens, our contribution is a universal sparsity-preserving transposition operation for sparse matrix representations (as well as a formalization of the intuition that an automatic derivative is a sparse representation of the Jacobian of the differentiated function).
Are there any other sparse-matrix operations that can be rendered universal as code transformations on \Lone?
Can sparse-matrix algorithms for, say, matrix multiplication be fruitfully recovered from a known code transformation (e.g., simplification or partial evaluation) applied to the composition of the corresponding indexed linear functions?
Can new sparse-matrix algorithms or representations be derived more simply or robustly from this direction?

\section{Conclusion}

We presented a decomposition of reverse-mode automatic differentiation into three distinct program transformations: forward-mode differentiation, unzipping, and transposition.
The interface between these transformations is a (substructural) linear type system for checking (algebraic) linearity.
Automatic derivatives type-check as linear, unzipping separates the linear and non-linear parts of a computation, and transposition runs the linear part backward to efficiently compute its transpose.

This decomposition clarifies and simplifies automatic differentiation systems, by separating the mind-bending direction-reversal needed for reverse-mode AD from the actual differentiating.
This decomposition also sheds light on checkpointing strategies---the decision of whether to save an intermediate value or recompute it is entirely the province of the unzipping transformation, which amounts to performing partial evaluation in a way that trades off run-time computation for lower memory usage.
Transposition handles any set of choices mechanically.

\section{Acknowledgements}

The authors would like to thank Pavel Sountsov for the observation that \Lone\ does not handle shape polymorphism,
Colin Carroll for the use case of simplifying custom gradients,
and Gordon Plotkin for introducing us to linearity, especially
the connection between linear algebra and linear types.
We also thank our anonymous reviewers for their remarkably detailed and insightful feedback.

\bibliographystyle{ACM-Reference-Format}
\bibliography{ref}


\begin{thebibliography}{22}


\ifx \showCODEN    \undefined \def \showCODEN     #1{\unskip}     \fi
\ifx \showDOI      \undefined \def \showDOI       #1{#1}\fi
\ifx \showISBNx    \undefined \def \showISBNx     #1{\unskip}     \fi
\ifx \showISBNxiii \undefined \def \showISBNxiii  #1{\unskip}     \fi
\ifx \showISSN     \undefined \def \showISSN      #1{\unskip}     \fi
\ifx \showLCCN     \undefined \def \showLCCN      #1{\unskip}     \fi
\ifx \shownote     \undefined \def \shownote      #1{#1}          \fi
\ifx \showarticletitle \undefined \def \showarticletitle #1{#1}   \fi
\ifx \showURL      \undefined \def \showURL       {\relax}        \fi
\providecommand\bibfield[2]{#2}
\providecommand\bibinfo[2]{#2}
\providecommand\natexlab[1]{#1}
\providecommand\showeprint[2][]{arXiv:#2}

\bibitem[Abadi and Plotkin(2019)]%
        {abadi2020simple}
\bibfield{author}{\bibinfo{person}{Mart\'{\i}n Abadi} {and}
  \bibinfo{person}{Gordon~D. Plotkin}.} \bibinfo{year}{2019}\natexlab{}.
\newblock \showarticletitle{A Simple Differentiable Programming Language}.
\newblock \bibinfo{journal}{\emph{Proc. ACM Program. Lang.}}
  \bibinfo{volume}{4}, \bibinfo{number}{POPL}, Article \bibinfo{articleno}{38}
  (\bibinfo{date}{dec} \bibinfo{year}{2019}), \bibinfo{numpages}{28}~pages.
\newblock
\urldef\tempurl%
\url{https://doi.org/10.1145/3371106}
\showDOI{\tempurl}


\bibitem[Bradbury et~al\mbox{.}(2018)]%
        {jax2018github}
\bibfield{author}{\bibinfo{person}{James Bradbury}, \bibinfo{person}{Roy
  Frostig}, \bibinfo{person}{Peter Hawkins}, \bibinfo{person}{Matthew~James
  Johnson}, \bibinfo{person}{Chris Leary}, \bibinfo{person}{Dougal Maclaurin},
  \bibinfo{person}{George Necula}, \bibinfo{person}{Adam Paszke},
  \bibinfo{person}{Jake Vander{P}las}, \bibinfo{person}{Skye
  Wanderman-{M}ilne}, {and} \bibinfo{person}{Qiao Zhang}.}
  \bibinfo{year}{2018}\natexlab{}.
\newblock \bibinfo{booktitle}{\emph{{JAX}: composable transformations of
  {P}ython+{N}um{P}y programs}}.
\newblock Google.
\newblock
\urldef\tempurl%
\url{http://github.com/google/jax}
\showURL{%
\tempurl}


\bibitem[Brunel et~al\mbox{.}(2019)]%
        {brunel2020backpropagation}
\bibfield{author}{\bibinfo{person}{Alo\"{\i}s Brunel}, \bibinfo{person}{Damiano
  Mazza}, {and} \bibinfo{person}{Michele Pagani}.}
  \bibinfo{year}{2019}\natexlab{}.
\newblock \showarticletitle{Backpropagation in the Simply Typed Lambda-Calculus
  with Linear Negation}.
\newblock \bibinfo{journal}{\emph{Proc. ACM Program. Lang.}}
  \bibinfo{volume}{4}, \bibinfo{number}{POPL}, Article \bibinfo{articleno}{64}
  (\bibinfo{date}{dec} \bibinfo{year}{2019}), \bibinfo{numpages}{27}~pages.
\newblock
\urldef\tempurl%
\url{https://doi.org/10.1145/3371132}
\showDOI{\tempurl}


\bibitem[Elliott(2018)]%
        {elliott2018simpleessence}
\bibfield{author}{\bibinfo{person}{Conal Elliott}.}
  \bibinfo{year}{2018}\natexlab{}.
\newblock \showarticletitle{The Simple Essence of Automatic Differentiation}.
\newblock \bibinfo{journal}{\emph{Proc. ACM Program. Lang.}}
  \bibinfo{volume}{2}, \bibinfo{number}{ICFP}, Article \bibinfo{articleno}{70}
  (\bibinfo{date}{jul} \bibinfo{year}{2018}), \bibinfo{numpages}{29}~pages.
\newblock
\urldef\tempurl%
\url{https://doi.org/10.1145/3236765}
\showDOI{\tempurl}


\bibitem[Elliott(2009)]%
        {elliott2009beautiful-ad}
\bibfield{author}{\bibinfo{person}{Conal~M. Elliott}.}
  \bibinfo{year}{2009}\natexlab{}.
\newblock \showarticletitle{Beautiful Differentiation}. In
  \bibinfo{booktitle}{\emph{Proceedings of the 14th ACM SIGPLAN International
  Conference on Functional Programming}} (Edinburgh, Scotland)
  \emph{(\bibinfo{series}{ICFP '09})}. \bibinfo{publisher}{Association for
  Computing Machinery}, \bibinfo{address}{New York, NY, USA},
  \bibinfo{pages}{191–202}.
\newblock
\showISBNx{9781605583327}
\urldef\tempurl%
\url{https://doi.org/10.1145/1596550.1596579}
\showDOI{\tempurl}


\bibitem[Frostig et~al\mbox{.}(2018)]%
        {frostig2018compiling}
\bibfield{author}{\bibinfo{person}{Roy Frostig}, \bibinfo{person}{Matthew
  Johnson}, {and} \bibinfo{person}{Chris Leary}.}
  \bibinfo{year}{2018}\natexlab{}.
\newblock \showarticletitle{Compiling machine learning programs via high-level
  tracing}. In \bibinfo{booktitle}{\emph{Machine Learning and Systems
  (MLSys)}}.
\newblock
\urldef\tempurl%
\url{https://mlsys.org/Conferences/doc/2018/146.pdf}
\showURL{%
\tempurl}


\bibitem[Frostig et~al\mbox{.}(2021)]%
        {frostig2021decomposing}
\bibfield{author}{\bibinfo{person}{Roy Frostig}, \bibinfo{person}{Matthew~J.
  Johnson}, \bibinfo{person}{Dougal Maclaurin}, \bibinfo{person}{Adam Paszke},
  {and} \bibinfo{person}{Alexey Radul}.} \bibinfo{year}{2021}\natexlab{}.
\newblock \showarticletitle{Decomposing reverse-mode automatic
  differentiation}. In \bibinfo{booktitle}{\emph{{LAFI}: {POPL} workshop on
  Languages for Inference}}. \bibinfo{publisher}{Association for Computing
  Machinery}, \bibinfo{address}{New York, NY, USA}.
\newblock
\urldef\tempurl%
\url{https://arxiv.org/abs/2105.09469}
\showURL{%
\tempurl}


\bibitem[Futamura(1983)]%
        {futamura1983partialeval}
\bibfield{author}{\bibinfo{person}{Yoshihiko Futamura}.}
  \bibinfo{year}{1983}\natexlab{}.
\newblock \showarticletitle{Partial computation of programs}. In
  \bibinfo{booktitle}{\emph{RIMS Symposia on Software Science and
  Engineering}}, \bibfield{editor}{\bibinfo{person}{Eiichi Goto},
  \bibinfo{person}{Koichi Furukawa}, \bibinfo{person}{Reiji Nakajima},
  \bibinfo{person}{Ikuo Nakata}, {and} \bibinfo{person}{Akinori Yonezawa}}
  (Eds.). \bibinfo{publisher}{Springer Berlin Heidelberg},
  \bibinfo{address}{Berlin, Heidelberg}, \bibinfo{pages}{1--35}.
\newblock
\showISBNx{978-3-540-39442-6}
\urldef\tempurl%
\url{https://doi.org/10.1007/3-540-11980-9_13}
\showDOI{\tempurl}


\bibitem[Griewank and Walther(2008)]%
        {griewank2008evaluating}
\bibfield{author}{\bibinfo{person}{Andreas Griewank} {and}
  \bibinfo{person}{Andrea Walther}.} \bibinfo{year}{2008}\natexlab{}.
\newblock \bibinfo{booktitle}{\emph{Evaluating Derivatives: Principles and
  Techniques of Algorithmic Differentiation} (\bibinfo{edition}{second} ed.)}.
\newblock \bibinfo{publisher}{Society for Industrial and Applied Mathematics},
  \bibinfo{address}{USA}.
\newblock
\showISBNx{0898716594}
\urldef\tempurl%
\url{https://doi.org/10.1137/1.9780898717761}
\showDOI{\tempurl}


\bibitem[Huot et~al\mbox{.}(2020)]%
        {huot2020correctness}
\bibfield{author}{\bibinfo{person}{Mathieu Huot}, \bibinfo{person}{Sam Staton},
  {and} \bibinfo{person}{Matthijs V{\'{a}}k{\'{a}}r}.}
  \bibinfo{year}{2020}\natexlab{}.
\newblock \showarticletitle{Correctness of Automatic Differentiation via
  Diffeologies and Categorical Gluing}. In
  \bibinfo{booktitle}{\emph{Foundations of Software Science and Computation
  Structures - 23rd International Conference, {FOSSACS} 2020, Held as Part of
  the European Joint Conferences on Theory and Practice of Software, {ETAPS}
  2020, Dublin, Ireland, April 25-30, 2020, Proceedings}}
  \emph{(\bibinfo{series}{Lecture Notes in Computer Science},
  Vol.~\bibinfo{volume}{12077})}, \bibfield{editor}{\bibinfo{person}{Jean
  Goubault{-}Larrecq} {and} \bibinfo{person}{Barbara K{\"{o}}nig}} (Eds.).
  \bibinfo{publisher}{Springer}, \bibinfo{pages}{319--338}.
\newblock
\urldef\tempurl%
\url{https://doi.org/10.1007/978-3-030-45231-5\_17}
\showDOI{\tempurl}


\bibitem[Jones et~al\mbox{.}(1993)]%
        {jones1993partialeval}
\bibfield{author}{\bibinfo{person}{Neil~D. Jones}, \bibinfo{person}{Carsten~K.
  Gomard}, {and} \bibinfo{person}{Peter Sestoft}.}
  \bibinfo{year}{1993}\natexlab{}.
\newblock \bibinfo{booktitle}{\emph{Partial Evaluation and Automatic Program
  Generation}}.
\newblock \bibinfo{publisher}{Prentice-Hall, Inc.}, \bibinfo{address}{USA}.
\newblock
\showISBNx{0130202495}


\bibitem[Karczmarczuk(1998)]%
        {karczmarczuk1998ad}
\bibfield{author}{\bibinfo{person}{Jerzy Karczmarczuk}.}
  \bibinfo{year}{1998}\natexlab{}.
\newblock \showarticletitle{Functional Differentiation of Computer Programs}.
  In \bibinfo{booktitle}{\emph{Proceedings of the Third ACM SIGPLAN
  International Conference on Functional Programming}} (Baltimore, Maryland,
  USA) \emph{(\bibinfo{series}{ICFP '98})}. \bibinfo{publisher}{Association for
  Computing Machinery}, \bibinfo{address}{New York, NY, USA},
  \bibinfo{pages}{195–203}.
\newblock
\showISBNx{1581130244}
\urldef\tempurl%
\url{https://doi.org/10.1145/289423.289442}
\showDOI{\tempurl}


\bibitem[Krawiec et~al\mbox{.}(2022)]%
        {krawiec2021provably}
\bibfield{author}{\bibinfo{person}{Faustyna Krawiec}, \bibinfo{person}{Simon
  Peyton~Jones}, \bibinfo{person}{Neel Krishnaswami}, \bibinfo{person}{Tom
  Ellis}, \bibinfo{person}{Richard~A. Eisenberg}, {and} \bibinfo{person}{Andrew
  Fitzgibbon}.} \bibinfo{year}{2022}\natexlab{}.
\newblock \showarticletitle{Provably Correct, Asymptotically Efficient,
  Higher-Order Reverse-Mode Automatic Differentiation}.
\newblock \bibinfo{journal}{\emph{Proc. ACM Program. Lang.}}
  \bibinfo{volume}{6}, \bibinfo{number}{POPL}, Article \bibinfo{articleno}{48}
  (\bibinfo{date}{jan} \bibinfo{year}{2022}), \bibinfo{numpages}{30}~pages.
\newblock
\urldef\tempurl%
\url{https://doi.org/10.1145/3498710}
\showDOI{\tempurl}


\bibitem[Mazza and Pagani(2021)]%
        {mazza2021pcfad}
\bibfield{author}{\bibinfo{person}{Damiano Mazza} {and}
  \bibinfo{person}{Michele Pagani}.} \bibinfo{year}{2021}\natexlab{}.
\newblock \showarticletitle{Automatic Differentiation in PCF}.
\newblock \bibinfo{journal}{\emph{Proc. ACM Program. Lang.}}
  \bibinfo{volume}{5}, \bibinfo{number}{POPL}, Article \bibinfo{articleno}{28}
  (\bibinfo{date}{jan} \bibinfo{year}{2021}), \bibinfo{numpages}{27}~pages.
\newblock
\urldef\tempurl%
\url{https://doi.org/10.1145/3434309}
\showDOI{\tempurl}


\bibitem[Paszke et~al\mbox{.}(2021)]%
        {paszke2021dex}
\bibfield{author}{\bibinfo{person}{Adam Paszke}, \bibinfo{person}{Daniel~D.
  Johnson}, \bibinfo{person}{David Duvenaud}, \bibinfo{person}{Dimitrios
  Vytiniotis}, \bibinfo{person}{Alexey Radul}, \bibinfo{person}{Matthew~J.
  Johnson}, \bibinfo{person}{Jonathan Ragan-Kelley}, {and}
  \bibinfo{person}{Dougal Maclaurin}.} \bibinfo{year}{2021}\natexlab{}.
\newblock \showarticletitle{Getting to the Point: Index Sets and
  Parallelism-Preserving Autodiff for Pointful Array Programming}.
\newblock \bibinfo{journal}{\emph{Proc. ACM Program. Lang.}}
  \bibinfo{volume}{5}, \bibinfo{number}{ICFP}, Article \bibinfo{articleno}{88}
  (\bibinfo{date}{aug} \bibinfo{year}{2021}), \bibinfo{numpages}{29}~pages.
\newblock
\urldef\tempurl%
\url{https://doi.org/10.1145/3473593}
\showDOI{\tempurl}


\bibitem[Pearlmutter and Siskind(2008a)]%
        {pearlmutter2008lambda}
\bibfield{author}{\bibinfo{person}{Barak~A Pearlmutter} {and}
  \bibinfo{person}{Jeffrey~Mark Siskind}.} \bibinfo{year}{2008}\natexlab{a}.
\newblock \showarticletitle{Reverse-mode AD in a functional framework: Lambda
  the ultimate backpropagator}.
\newblock \bibinfo{journal}{\emph{ACM Transactions on Programming Languages and
  Systems (TOPLAS)}} \bibinfo{volume}{30}, \bibinfo{number}{2}
  (\bibinfo{year}{2008}), \bibinfo{pages}{1--36}.
\newblock
\urldef\tempurl%
\url{https://doi.org/10.1145/1330017.1330018}
\showDOI{\tempurl}


\bibitem[Pearlmutter and Siskind(2008b)]%
        {pearlmutter2008lambda-tub}
\bibfield{author}{\bibinfo{person}{Barak~A. Pearlmutter} {and}
  \bibinfo{person}{Jeffrey~Mark Siskind}.} \bibinfo{year}{2008}\natexlab{b}.
\newblock \showarticletitle{Reverse-Mode AD in a Functional Framework: Lambda
  the Ultimate Backpropagator}.
\newblock \bibinfo{journal}{\emph{ACM Trans. Program. Lang. Syst.}}
  \bibinfo{volume}{30}, \bibinfo{number}{2}, Article \bibinfo{articleno}{7}
  (\bibinfo{date}{mar} \bibinfo{year}{2008}), \bibinfo{numpages}{36}~pages.
\newblock
\showISSN{0164-0925}
\urldef\tempurl%
\url{https://doi.org/10.1145/1330017.1330018}
\showDOI{\tempurl}


\bibitem[Piponi(2004)]%
        {Piponi2004ad}
\bibfield{author}{\bibinfo{person}{Dan Piponi}.}
  \bibinfo{year}{2004}\natexlab{}.
\newblock \showarticletitle{Automatic Differentiation, C++ Templates, and
  Photogrammetry}.
\newblock \bibinfo{journal}{\emph{Journal of Graphics Tools}}
  \bibinfo{volume}{9}, \bibinfo{number}{4} (\bibinfo{year}{2004}),
  \bibinfo{pages}{41--55}.
\newblock
\urldef\tempurl%
\url{https://doi.org/10.1080/10867651.2004.10504901}
\showDOI{\tempurl}


\bibitem[Piponi(2009)]%
        {piponi2009transposition}
\bibfield{author}{\bibinfo{person}{Dan Piponi}.}
  \bibinfo{year}{2009}\natexlab{}.
\newblock \showarticletitle{Two Tricks for the Price of One: Linear Filters and
  Their Transposes}.
\newblock \bibinfo{journal}{\emph{J. Graphics, GPU, {\&} Game Tools}}
  \bibinfo{volume}{14}, \bibinfo{number}{1} (\bibinfo{year}{2009}),
  \bibinfo{pages}{63--72}.
\newblock
\urldef\tempurl%
\url{https://doi.org/10.1080/2151237X.2009.10129275}
\showDOI{\tempurl}


\bibitem[Sabry and Felleisen(1992)]%
        {sabry1992anf}
\bibfield{author}{\bibinfo{person}{Amr Sabry} {and} \bibinfo{person}{Matthias
  Felleisen}.} \bibinfo{year}{1992}\natexlab{}.
\newblock \showarticletitle{Reasoning about Programs in Continuation-Passing
  Style.}
\newblock \bibinfo{journal}{\emph{SIGPLAN Lisp Pointers}} \bibinfo{volume}{V},
  \bibinfo{number}{1} (\bibinfo{date}{jan} \bibinfo{year}{1992}),
  \bibinfo{pages}{288–298}.
\newblock
\showISSN{1045-3563}
\urldef\tempurl%
\url{https://doi.org/10.1145/141478.141563}
\showDOI{\tempurl}


\bibitem[Walther et~al\mbox{.}(2003)]%
        {walther2003ooad}
\bibfield{author}{\bibinfo{person}{Andrea Walther}, \bibinfo{person}{Andreas
  Griewank}, {and} \bibinfo{person}{Olaf Vogel}.}
  \bibinfo{year}{2003}\natexlab{}.
\newblock \showarticletitle{ADOL-C: Automatic Differentiation Using Operator
  Overloading in C++}.
\newblock \bibinfo{journal}{\emph{PAMM}} \bibinfo{volume}{2},
  \bibinfo{number}{1} (\bibinfo{year}{2003}), \bibinfo{pages}{41--44}.
\newblock
\urldef\tempurl%
\url{https://doi.org/10.1002/pamm.200310011}
\showDOI{\tempurl}


\bibitem[Wang et~al\mbox{.}(2019)]%
        {wang2019shiftreset}
\bibfield{author}{\bibinfo{person}{Fei Wang}, \bibinfo{person}{Daniel Zheng},
  \bibinfo{person}{James Decker}, \bibinfo{person}{Xilun Wu},
  \bibinfo{person}{Gr\'{e}gory~M. Essertel}, {and} \bibinfo{person}{Tiark
  Rompf}.} \bibinfo{year}{2019}\natexlab{}.
\newblock \showarticletitle{Demystifying Differentiable Programming:
  Shift/Reset the Penultimate Backpropagator}.
\newblock \bibinfo{journal}{\emph{Proc. ACM Program. Lang.}}
  \bibinfo{volume}{3}, \bibinfo{number}{ICFP}, Article \bibinfo{articleno}{96}
  (\bibinfo{date}{jul} \bibinfo{year}{2019}), \bibinfo{numpages}{31}~pages.
\newblock
\urldef\tempurl%
\url{https://doi.org/10.1145/3341700}
\showDOI{\tempurl}


\end{thebibliography}

\end{document}